\DeclareMathOperator{\EEE}{\mathbb{E}}
\DeclareMathOperator{\C}{\mathbb{C}}
\DeclareMathOperator{\f}{\pmb{f}}
\DeclareMathOperator{\aaa}{\pmb{a}}
\DeclareMathOperator{\FF}{\mathcal{F}}
\DeclareMathOperator{\OO}{\mathcal{O}}
\DeclareMathOperator{\vv}{\pmb{v}}
\DeclareMathOperator{\HHH}{\mathcal{H}}
\DeclareMathOperator{\RRR}{\mathbb{R}}
\DeclareMathOperator{\LL}{\mathcal{L}}
\DeclareMathOperator{\CN}{\mathcal{CN}}
\DeclareMathOperator{\MM}{\mathcal{M}}
\DeclareMathOperator{\NN}{\mathcal{N}}
\DeclareMathOperator{\rr}{\pmb{r}}
\DeclareMathOperator{\x}{\pmb{x}}
\DeclareMathOperator{\ttt}{\pmb{t}}
\DeclareMathOperator{\uu}{\pmb{u}}
\DeclareMathOperator{\ETA}{\pmb{\eta}}
\DeclareMathOperator{\VARPHI}{\pmb{\varphi}}
\DeclareMathOperator{\ZETA}{\pmb{\zeta}}
\newtheorem{definition}{Definition}
\theoremstyle{remark}
\newtheorem{remark}{Remark}
\newtheorem{proposition}{Proposition}
\begin{document}
\bstctlcite{IEEEexample:BSTcontrol}
%
\title{\huge  Joint Resource Allocation to Minimize Execution Time of Federated Learning in Cell-Free Massive MIMO}
%
%
\author{Tung~T.~Vu,~\IEEEmembership{Member,~IEEE}, Duy~T.~Ngo,~\IEEEmembership{Senior Member,~IEEE}, 
Hien~Quoc~Ngo,~\IEEEmembership{Senior Member,~IEEE}, 
Minh~N.~Dao, 
Nguyen~H.~Tran,~\IEEEmembership{Senior Member,~IEEE},
        and Richard~H.~Middleton,~\IEEEmembership{Fellow,~IEEE}
\thanks{T.~T.~Vu and H.~Q.~Ngo are with the Institute of Electronics, Communications, and Information Technology (ECIT), Queen's University Belfast, Belfast BT7 1NN, United Kingdom (e-mail: \{t.vu,hien.ngo\}@qub.ac.uk).}
\thanks{D.~T.~Ngo and R.~H.~Middleton are with the School of Engineering, The University of Newcastle, Callaghan, NSW 2308, Australia (e-mail: \{duy.ngo, richard.middleton\}@newcastle.edu.au).}
\thanks{N.~H.~Tran is with the School of Computer Science, The University of Sydney, Sydney, NSW 2006, Australia  (e-mail: nguyen.tran@sydney.edu.au).}
\thanks{M.~N.~Dao is with the School of Engineering, Information Technology and Physical Sciences, Federation University, Ballarat, VIC 3353, Australia (e-mail: m.dao@federation.edu.au).}
\vspace{-5mm}}

\maketitle
\vspace{-0mm}
\begin{abstract}
Due to its communication efficiency and privacy-preserving capability, federated learning (FL) has emerged as a promising framework for machine learning in 5G-and-beyond wireless networks. Of great interest is the design and optimization of new wireless network structures that support stable and fast operation of FL. Cell-free massive multiple-input multiple-output (CFmMIMO) turns out to be a suitable candidate, which allows each communication round in the iterative FL process to be stably executed within a large-scale coherence time. Aiming to reduce the total execution time of the FL process in CFmMIMO, this paper proposes choosing only a subset of available users to participate in FL. An optimal selection of users with favorable link conditions would minimize the execution time of each communication round, while limiting the total number of communication rounds required. Toward this end, we formulate a joint optimization problem of user selection, transmit power, and processing frequency, subject to a predefined minimum number of participating users to guarantee the quality of learning. We then develop a new algorithm that is proven to converge to the neighbourhood of the stationary points of the formulated problem. Numerical results confirm that our proposed approach significantly reduces the FL total execution time over baseline schemes. The time reduction is more pronounced when the density of access point deployments is moderately low.
\end{abstract}
\vspace{-0mm}
\begin{IEEEkeywords}
\vspace{-0mm}
Cell-free massive MIMO, federated learning, execution time minimization.
\vspace{-0mm}
\end{IEEEkeywords}

%
\IEEEpeerreviewmaketitle

\vspace{-3mm}
\section{Introduction}
\vspace{-0mm}
\label{sec:Introd}
The numbers of mobile devices and connections have been growing significantly in recent years. According to Cisco \cite{cisco20}, while the number of global mobile devices is expected to reach $13.1$ billion by 2023, more than $10\%$ of this figure will have 5G connections. These devices generate a vast amount of data, which in turn enable a wide range of on-device artificial intelligence (AI) services, such as traffic navigation, indoor localization, image recognition, natural language processing, and augmented reality \cite{dong19TMC,letaief19CM,cala18CM,viet21A}. However, it is impractical to use conventional centralized approaches to train AI models (especially those by deep neural networks) at mobile devices. As such approaches store and process data at distant cloud centers, they find it extremely challenging to support delay-critical applications. Moreover, uploading user raw data to distant cloud servers also raises serious concerns about user's data privacy \cite{zhu20CM}.

{Federated learning (FL) has recently emerged as a promising solution for AI model training at wireless devices with a certain guarantee of data privacy \cite{khan21CST,niknam20CM,li20SPM,du20OJCS,song20CIM,chen20CM}. An FL process is iterative and involves several communication rounds. In each communication round, users (UEs) compute their local model updates by using their local training data, followed by sending these updates to a central server. The central server  aggregates the received local updates to generate a global model update, which is then sent back to the UEs for their subsequent local computation. The FL process terminates when a prescribed level of learning accuracy is attained; at which point, a learning model is established. Here, since only model updates (instead of raw training data) is shared between UEs and a central server, data privacy of each UE is protected. Furthermore, as a model update is much smaller in size than raw training data, sending a model update requires a much shorter amount of time.} 

In the literature, there are two main research directions that study FL in wireless network environments. The learning-oriented direction aims to develop FL frameworks that improve learning performance such as test accuracy. They do so by mitigating the detrimental effects of wireless transmissions, such as channel fading and estimation errors, on FL \cite{chen21TWC,amiri20TWC,chen20ICC,dinhIoT21}. 
On the other hand, the communication-oriented direction aims to develop communication schemes that optimize certain performance metrics for communications. Examples  of  these metrics  include  execution  time  (in  seconds)  and  energy  consumption (in Joule) of an FL process executed ``over-the-air'' \cite{yang21TWC, zeng21TWC, zeng20ICC, hu20WCSP, vu20TWC}.

\begin{figure*}[t!]
\centering
\includegraphics[width=0.55\textwidth]{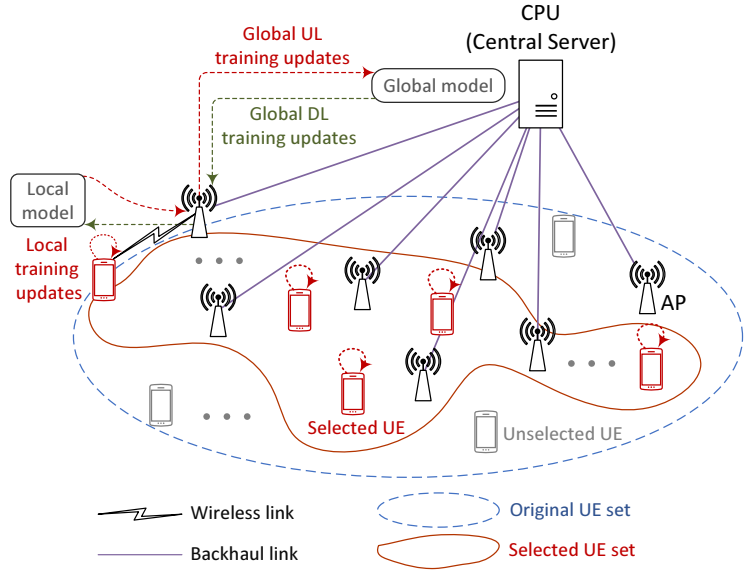}
\vspace{-4mm}
\caption{A CFmMIMO network model for wireless federated learning.}
\label{fig:systemmodel}
\vspace{-2mm}
\end{figure*}

The focus of this paper is on the communication-oriented direction. Here, \cite{yang21TWC, zeng21TWC, zeng20ICC, hu20WCSP} develop  new  wireless  network  designs that use frequency-division multiple access (FDMA) and time-division multiple access (TDMA) to support FL. In these works, the transmission time of each FL communication round could  be  significantly  prolonged  when  the  number  of  UEs is  large. To enable stable and fast FL over wireless media, \cite{vu20TWC,vu21ICC,vu21SPAWC} propose to use cell-free massive multiple-input multiple-output (CFmMIMO) to assist the process of FL.  In a CFmMIMO network, UEs are simultaneously served by a large number of distributed access points (APs) over the same frequency band, and hence, CFmMIMO can offer very high macro diversity and multiplexing gains \cite{ngo17TWC}. As a result, it can uniformly provide very high data rates for all users in the network which enables the stable and fast FL. Paper\cite{vu20TWC} shows that the CFmMIMO network can achieve a much shorter FL execution time than that of conventional TDMA/FDMA networks. 
We note that in a wireless network, the information transmitted is not necessarily private. However, in a wireless network with FL, user privacy is better protected since only the model updates (instead of raw data) are sent from devices to a central processing unit. While the risk of privacy leakage exists in a wireless FL network, it does not overtake the benefits of FL. We also note that more advanced FL algorithms to secure FL in wireless networks have been recently studied (see \cite{liu20WC} and references therein).

\textbf{Scope of Research}: In this paper, we employ CFmMIMO as the underlying wireless network structure and propose novel resource allocation schemes to support a wireless FL process. Here, we use a conjugate beamforming/matched filtering scheme at the APs. Implemented locally at each AP, this scheme has been shown to perform well when the number of APs is large. The overarching objective is to minimize the total FL execution time, which is a product of the number of FL communication rounds and the execution time of each round.

\textbf{UE Selection}: To achieve the above design objective, we propose allowing only a subset of UEs with favorable links to participate in the learning process. Doing so would reduce the execution time of one FL communication round, albeit at the expense of requiring more FL communication rounds for the FL process to converge. Our aim is to devise a UE selection policy that strikes an optimal balance between these two conflicting outcomes, thereby offering a minimum total FL execution time. This aim is different from that of improving the learning performance (such as robustness, test accuracy, and convergence rate) via UE selection, as advocated by \cite{li20ICLR,farzin19arXiv,xia20TWC,mcmahan17AISTATS,sai20ICML}.

It is worth noting that as the UE selection reduces the number of participating UEs, the test accuracy of an FL process is also affected \cite{yang20TWC}. On the other hand, without UE selection, the FL execution time could potentially be prolonged by the UEs with unfavorable link conditions. These UEs need much more time to transmit. Over a large service area and with a large number of UEs, it is more common to have these vulnerable UEs. Of our particular interest is how to find an optimal set of ``sufficiently good" UEs to reduce the FL execution time, without satisfying the test accuracy too much. This important observation leads to two research questions: 
\begin{itemize}
    \item[Q1)] What is the minimum number of participating UEs $N_{\text{QoL}}$, above which the test accuracy remains acceptable? 
    \item[Q2)] For a given $N_{\text{QoL}}$, how to select UEs with favorable links to minimize the total FL execution time?
\end{itemize}
In this paper, we focus on Q2) and leave Q1) for future research.

\textbf{New Optimization-Based Design and Results}: UE selection has tight relationships with the allocation of other resources. Specifically, the decision which UE is selected affects how to allocate transmit power and computing frequency, and vice versa. Therefore, it is not clear and difficult to see how to find manually or heuristically the favourable links and determine the exact number of the selected UEs for minimizing the FL execution time. Motivated by this observation, we propose a joint optimization approach for UE selection and resource allocation.  In particular, to minimize the execution time of one FL process, the FL execution time is first formulated as a function of UE selection, transmit power, and computation frequency. 
A mixed-integer two-stage stochastic nonconvex problem of minimizing the FL execution time is then formulated and subjected to the constraints that reflex the practical relationships among all the variables, and the practical constraints on 
maximum powers at the APs and UEs, imperfect channel estimation, and a minimum number of selected UEs.
The considered problem is different from that in \cite{vu21ICC} and \cite{vu21SPAWC} which aim to reduce the execution time in an FL communication round, rather than minimizing the total FL execution time.

\begin{table*}[t!]
\renewcommand{\arraystretch}{1.0}
\caption{Frequently Used Symbols}
\vspace{-2mm}
\label{table:notation}
\begin{center}
\begin{tabular}{|c|c|}
\hline
\textbf{Symbols}  & \textbf{Definition}
\\
\hline
\hline
$\NN, N$      & Set of UEs and its cardinality\\
\hline
$\MM, M$      & Set of APs and its cardinality\\
\hline
$N_{\text{QoL}}$      & Minimum number of UEs for quality of learning \\
\hline
$T_c$      & (Small-scale) coherence time\\
\hline
$\widetilde{T}_c$        & Large-scale coherence time \\
\hline
$a_k$    & Selection variable of UE $k$ \\
\hline
$\eta_{mk}$    & Downlink power control coefficient of UE $k$ at AP $m$ \\
\hline
$f_k$    & Processing frequency of UE $k$ \\
\hline
$\zeta_k$    & Uplink power control coefficient of UE $k$ \\
\hline
$S_d, S_u$    & Sizes of global and local updates \\
\hline
\end{tabular}
\end{center}
\vspace{-2mm}
\end{table*}

To solve the formulated problem, we propose a new algorithm that is proven to converge to the neighbourhood of its stationary points. The mathematical challenges lie in the binary nature of UE selection variables and the coupling among the optimizing variables. Structure-wise, this optimization problem is much more complex than those considered in \cite{vu20TWC} and \cite{vu21SPAWC}. The algorithm developed in this paper is able to handle the binary constraints efficiently, whilst meeting all the strict conditions for solving two-stage stochastic nonconvex problems \cite{liu18TSP}. Numerical results with practical parameter settings verify the convergence of the proposed algorithm. They also show that our UE selection approach can cut the total FL execution time by more than half compared to the baseline schemes.

{\textbf{Notation and Paper Organization}:} 
In this paper, boldfaced symbols are used for vectors and capitalized boldfaced symbols for matrices.
$\RRR^d$ is a space where its elements are real vectors with length $d$.
$\pmb{X}^*$ and $\pmb{X}^H$ are the conjugate and conjugate transposition of a matrix $\pmb{X}$, respectively.
$\CN(\pmb{0},\pmb{Q})$ denotes the circularly symmetric complex Gaussian distribution with zero mean and covariance $\pmb{Q}$. $\NN(0,V)$ denotes the normal distribution with zero mean and variance $V$. 
$\nabla g$ is the gradient of a function $g$.
$\EEE\{x\}$ denotes the expected value of a random variable $x$.
For ease of reference, the symbols frequently used throughout the paper are listed in Table~\ref{table:notation}.


The rest of this paper is organized as follows. Section~\ref{sec:FL:SystemModel} introduces a CFmMIMO system model to support a standard FL framework. Section~\ref{sec:PF} formulates the optimization problem of minimizing the FL execution time, whereas Section~\ref{sec:alg} proposes a new algorithm to solve this problem.
Section~\ref{sec:sim} verifies the performance of the developed algorithm through numerical examples. Finally, Section~\ref{sec:con} concludes the paper.

\vspace{-3mm}
\section{Cell-Free Massive MIMO System Model to Support Wireless Federated Learning}
\label{sec:FL:SystemModel}
\vspace{-1mm}

We consider the CFmMIMO network model illustrated in Fig.~\ref{fig:systemmodel}. 
Let ${\NN}\triangleq\{1,\dots,N\}$ be the UE set in the network. The UEs are served by a set of APs $\MM=\{1,...,M\}$ via wireless access links with the same time-frequency resource \cite{ngo16,emil17}. The transmission between the APs and UEs are under the time-division duplexing operation with channel reciprocity\footnote{In general, channel reciprocity error can be a performance limiting factor in wireless networks using the time-division duplexing operation. It is because the hardware chains in the transmitter and receivers may not be reciprocal between the uplink and downlink. However, according to \cite{erik14CM}, the calibration of hardware chains for channel reciprocity is not a serious problem in \emph{massive MIMO systems}. For example, there are calibration solutions that have been tested successfully for practical 64-antenna systems \cite{shepard12,florian10}.}.
The APs and UEs are each equipped with a single antenna. These APs are assumed to connect to a central processing unit (CPU) via backhaul links with high-capacity. As such, the transmission times between the CPU and all the APs are considered negligible \cite{ngo17TWC}.
The CPU and UEs act as the central server and the clients in an FL process, respectively. Here, the APs relay the model updates between the CPU and the UEs.

\vspace{-2mm}
\subsection{{UE Selection Model and Standard Federated Learning Framework}}
\label{model:UESelection}

{
First, we propose to select a subset of UEs out of $N$ original UEs to participate in an FL process. 
Let variable $a_k$ indicate whether a UE $k\in\NN\triangleq\{1,\dots,N\}$ is selected to partake in an FL process, i.e.,
\begin{align}\label{a}
a_k \triangleq
\begin{cases}
  1,& \text{if UE $k$ is selected,}\\
  0, & \mbox{otherwise}.
\end{cases}
\end{align}
Let $\widetilde{\NN}$ be the set of selected UEs. Denote by $\widetilde{N}\triangleq\sum_{k\in\NN}a_k\leq N$ the cardinality of $\widetilde{\NN}$ or the number of selected UEs, and
by $\aaa\triangleq[a_1,\dots,a_N]^T$ the vector of UE selection.
{We insist that}
\begin{align}
\label{tildeN}
& \sum_{k\in\NN}a_k \geq N_{\text{QoL}},
\end{align}
where $N_{\text{QoL}}$ is a predefined minimum number of participating UEs, chosen such that an acceptable \emph{quality of learning (QoL)} is achieved. As noted in Sec.~\ref{sec:Introd}, how to determine the exact value of $N_{\text{QoL}}$ is out of the scope of this paper.}

Next, the above $\widetilde{N}$ selected UEs will participate in an FL process that adopts the standard FL framework with a synchronous aggregation mode\footnote{FL algorithms with the synchronous aggregation mode wait to receive all local model updates sent from UEs before aggregation, while the FL algorithms with an asynchronous aggregation mode do not. The FL algorithms with synchronous aggregation normally offer better convergence rate and accuracy than the FL algorithms operating with asynchronous aggregation. FL algorithms with synchronous aggregation are well studied \cite{xu21}, while research on improvement of learning performance of the FL algorithms with asynchronous aggregation is still in its infancy. As such, our paper focuses on resource allocation for supporting FL with synchronous aggregation \cite{canh21TN, sai20ICML, canh20NIPS, felix20TNNLS, tran19INFOCOM, mcmahan17AISTATS}.} \cite{canh21TN, sai20ICML, canh20NIPS, felix20TNNLS, tran19INFOCOM, mcmahan17AISTATS}. In general, it is an iterative process consisting of multiple communication rounds. Each communication round involves the following basic steps (S1)-(S4):
\vspace{-0mm}
\begin{enumerate}[label={(S\arabic*)}]
\item 
A central server sends the global update to all the selected UEs.
\item 
Upon receiving the global update from the central server, the UEs solve their local learning problems (based on their local data set), and compute their local model updates.
\item 
The UEs send their local updates to the central server.
\item 
Upon receiving all the local updates from UEs, the central server computes the global update by aggregation.
\end{enumerate}
The communication rounds repeat until a certain level of test accuracy is attained at the server. At which point, the server terminates the whole FL process.
\vspace{-0mm}



%

\subsection{{CFmMIMO System Model for Wireless Federated Learning}}\label{subsec:CFmMIMO}

{
Now, we will describe in detail how a communication round outlined in Sec.~\ref{model:UESelection} is realized by CFmMIMO.}
 
\textbf{Assumption (A1)}: All selected UEs execute steps (S2) and (S3) in a synchronized manner. Within a step (S2) or (S3), if a UE completes its operation before other UEs, it must wait until all other UEs complete their respective operation before proceeding to the next step.
\vspace{-0mm}

\begin{remark}
It should be noted that synchronising the operations of UEs at the Steps (S2) and (S3) would cause some waste of radio resources because of the wait time among the UEs. Instead, an asynchronous design may potentially reduce the execution time by removing the wait time. However, in the asynchronous design, some UEs may finish Step (S2) while other UEs have not even completed Step (S1) yet. Therefore, it is difficult for the APs to know exactly when they need to switch to the uplink mode to receive local model updates from the UEs. The solutions for overcoming this difficulty
may make the designs of signal processing and signalling much more complex. Based on this observation, this work focuses on the proposed synchronous communication design for ease of implementation, and leaves the asynchronous design for future work.
\end{remark}

{
\textbf{Assumption (A2)}: Each FL communication round is completely executed within a large-scale coherence time of the wireless channel \cite{vu20TWC}.\footnote{A large-scale coherence time is the period of time where the large-scale fading coefficients are reasonably invariant \cite{vu20TWC}.}}

\begin{figure*}[t!]
	\centering
	\includegraphics[width=0.6\textwidth]{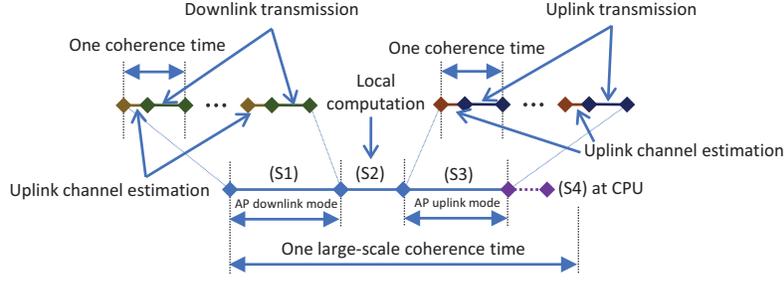}
	\vspace{-0mm}
	\caption{Details of operation of one FL communication round in the considered CFmMIMO network.}
	\label{fig:time2}
\end{figure*}

\begin{remark}
Assumption 2 is realistic in many practical scenarios, such as outdoor systems with  moderately low user mobility (e.g., dense urban areas) and indoor systems (e.g., office buildings, stadiums, cinema theaters, factory). 
An example is a system that supports users' speed of $v \leq 7.5$ m/s $= 27$ km/h. With a conventional carrier frequency $f_c=1$ GHz  \cite{emil17}, the coherence time of the channels is $T_c \geq \frac{c_\ell}{4f_c v} = 10$ ms, where $c_\ell=3\times 10^8$ m/s is the speed of light. According to the measurements in \cite{viering02}, the large-scale coefficients can stay relatively unchanged for at least $100$ times longer than the coherence time. Therefore, the large-scale coherence time is $\widetilde{T}_c \geq 100T_c \geq 1$ s. Let $S = 4$ KB = $32$ Kb be the size of global and local updates \cite{tran19INFOCOM}. For a 5G system with a user-experienced downlink rate of $R_d=100$ Mbps and uplink rate of $R_u = 50$ Mbps \cite{ericsson20WP}, the time durations for downlink and uplink transmissions are $T_d = \frac{S}{R_d} = 0.32$ ms, $T_u = \frac{S}{R_u} = 0.64$ ms, respectively.
Now, let $D_s = 10^6$ be number of data samples at each UE, $c=20$ cycles/sample be the number of CPU cycles required to process one data sample, $f=3\times 10^9$ cycles/s be the frequency of processing CPU cycles, and $N_L=5$ be the number of iterations of processing the local data at UEs \cite{tran19INFOCOM}. 
Then, the time duration needed to compute the local update is $T_c = N_L \frac{cD_s}{f} = 33.3$ ms. Therefore, the execution time of one FL communication round is $T_o = T_d + T_c + T_u = 34.26$ ms $<< 1$ s $\leq \widetilde{T}_c$. Note that the number of collected data samples at UEs in FL applications is normally small (e.g., $10^3-10^4$) \cite{prayitno21}. Therefore, the time of computing local updates should be much shorter than that in our example. This means in practice, the executive time of one FL communication round can even be much smaller than the large-scale coherence time.

On the other hand, Assumption 2 does not impose any practical challenge for deploying cell-free massive MIMO in indoor scenarios. The issues of deployment complexity and high front-haul capacity of cell-free massive MIMO in indoor scenarios can be effectively resolved by using an appropriate architecture, namely \textit{radio stripe system} \cite{giovanni19,shaik21TWC,miretti21SPAWC}.
In this system, each antenna element is effectively an AP. Then, a large number of small-size antennas are put in a cable or a radio stripe. Inside the radio stripe, antennas are connected to their associated antenna processing units (APUs). The APUs and antennas are power-supplied via a shared bus that is serially located in the radio stripe. Finally, each radio stripe is connected to one or multiple central processing units. The radio stripe system turns the long front-haul cables into plug-and-play radio stripes, which effectively and flexibly implement
the star topology and improve the coverage of cell-free massive MIMO. The required front-haul capacity of each radio stripe is proportional to the sum rate of data streams for transmitting to or receiving from the users at the maximum network load. Since there is a large number of antennas in the radio stripe system, the required front-haul capacity can be reduced by reducing the number of users served per radio stripe with a proper design of UE-radio stripe association \cite{giovanni19}.
\end{remark}

{
With the above assumptions, we are ready to implement Steps (S1)-(S4) by CFmMIMO, as illustrated in Fig.~\ref{fig:time2}.}

\subsubsection{Step (S1)}
This step has the following two phases, both executed within a coherence block.

\textbf{Uplink channel estimation:}
Uplink pilot sequences are sent by all the UEs to all the APs simultaneously in each small-scale coherence time to help estimate channels. These estimates will later be used to construct downlink signal beams. Here, we assume time-division duplexing where both uplink and downlink channel information is acquired at the APs via uplink channel estimation \cite{ngo17TWC}.
Denote by $\tau_c$ the number of samples of each coherence block, and by $\tau_{t}$ (samples) the length of one pilot sequence.
Let $\sqrt{\tau_{t}}\VARPHI_{k}\in\C^{\tau_{t}\times 1}$ be the pilot sequence transmitted from a UE $k\in\NN$, where $\|\VARPHI_{k}\|^2\!\!=\!\!1,\forall k\in\NN$.
Denote by $g_{mk} \!=\! (\beta_{mk})^{1/2}\tilde{g}_{mk}$ the channel from a UE $k$ to an AP $m$, where $\beta_{mk}$ and $\tilde{g}_{mk} \sim \CN(0,1)$ are the large-scale fading and small-scale fading channel coefficients, respectively. At the AP $m$, $g_{mk}$ is estimated by using the received pilots and  the minimum mean-square error (MMSE) estimation. The MMSE estimate $\hat{g}_{mk}$ of $g_{mk}$ is a random variable distributed according to $\CN(0,\hat{\sigma}_{mk}^2)$, where
$\hat{\sigma}_{mk}^2=\frac{\tau_{t}\rho_{t}(\beta_{mk})^2}
{\sum_{\ell\in\NN}\tau_{t}\rho_{t}\beta_{m\ell}|\VARPHI_k^H\VARPHI_\ell|^2+1}$ \cite{ngo17TWC}. 

\textbf{Downlink transmission}:
Denote by $S_d$ (bits) the data size of the (same) global downlink model update for each selected UE. Since the global downlink model update can include thousands of model weights, $S_d$ is typically large. At the CPU, the global model update intended for UE $k$ is encoded into many symbols $s_{d,k,i} \sim \CN(0,1), \forall i\in\{1,\dots,L_k\}$. The number of symbols $L_k$ depends on the data size of the model update as well as the data rate. For ease of presentation, hereafter, we drop the index $i$ from $s_{d,k,i}$ and let $R_{d,k}$ be the data rate of the global update intended for a UE $k\in\widetilde{\NN}$.

To transmit the symbols received from the CPU, the APs first use conjugate beamforming to precode these symbols before broadcasting them to all the selected UEs. Specifically, the transmitted signal at an AP $m$ is given as $x_{d,m}\!\!=\!\! \sqrt{\rho_{d}}\sum_{k\in\NN}\sqrt{\eta_{mk}}(\hat{g}_{mk})^*s_{d,k}$,
where $\rho_{d}$ is the maximum normalized transmit power at each AP
and $\eta_{mk}, \forall m\in\MM,k\in\NN$, is a power control coefficient.
The transmitted power at the AP $m$ is required to meet the average normalized power constraint, i.e., $\EEE\{|x_{d,m}|^2\}\leq \rho_d$, which can also be expressed as the following per-AP power constraint:
\begin{align}\label{power:d:cons}
\sum_{k\in\NN}\sigma_{mk}^2\eta_{mk} \leq 1, \forall m.
\end{align}
Since no power should be allocated to the non-selected UEs, we have
\begin{align}\label{a:DL}
\forall k\in\NN: \text{if}\,\,a_k = 0,\,\,\text{then}\,\,\forall m\in\MM, \eta_{mk} = 0.
\end{align}

\begin{figure*}[t!]
\begin{align}\label{R:d}
 R_{d,k}(\ETA)\!=\!\frac{\tau_c\!-\!\tau_t}{\tau_c}B\log_2\!\Bigg(1\!+\!\frac
{\rho_d\big(\sum_{m\in\MM}\eta_{mk}^{1/2}\hat{\sigma}_{mk}^2\big)^2}
{
\underbrace{
\rho_d\sum_{\ell\in\NN\setminus k}
\big(\sum_{m\in\MM}\eta_{m\ell}^{1/2}\hat{\sigma}_{m\ell}^2\frac{\beta_{mk}}{\beta_{m\ell}}\big)^2
|\VARPHI_\ell^H\VARPHI_k|^2
}_{\text{Pilot contamination}
}
\!+\!\underbrace{\rho_d\sum_{\ell\in\NN}\sum_{m\in\MM}\eta_{m\ell}\hat{\sigma}_{m\ell}^2\beta_{mk}}_{\text{Inter-user interference}}
+1}\Bigg)
\end{align}
\vspace{-0mm}
\end{figure*}

The achievable downlink data rate at a UE $k$ is given in \eqref{R:d} \cite[(24)]{ngo17TWC} (shown at the top of next page),
where $\ETA\triangleq\{\eta_{mk}\}_{m\in\MM,k\in\NN}$, and $B$ is the bandwidth.
The numerator in the log function represents the power of the desired signals normalized by noise. The three terms in the denominator in the log function represent the normalized-by-noise power of the pilot contamination, inter-UE interference and noise, respectively.

The transmission time from the APs to a UE $k$ is given by
\vspace{-0mm}
\begin{align}\label{latency:d:wireless}
t_{d,k}(a_k,\ETA) = \frac{a_kS_d}{R_{d,k}(\ETA)},
\end{align}
where $a_k\neq 0$. Here, \eqref{latency:d:wireless} captures the fact that the non-selected UEs do not receive any intended downlink transmission, i.e., $t_{d,k} = 0$, for any $k\in\NN$ with $a_k = 0$.

\subsubsection{Step (S2)}
\label{S2}
After receiving the global update, each UE $k$ computes its local model update by using its local dataset.

\textbf{Computation delay}: Let $c_k$ (cycles/sample) be the number of processing cycles required to process one data sample at a UE $k$. Assume that
$c_k$ is known \emph{a priori} by an offline measurement \cite{miettinen10}.
Denote by $D_k$ (samples) and $f_k$ (cycles/s) the size of the local data set and the processing frequency of the UE $k$, respectively.
The total computing time at the UE $k$ is expressed as \cite{tran19INFOCOM}
\begin{align}
    t_{c,k}=\frac{a_kLD_kc_k}{f_k},
\end{align}
where 
$a_k \neq 0$, 
$L$ is the number of local computing iterations, and $\frac{D_kc_k}{f_k}$ is the computing time of each iteration over the local training data set at the UE $k$. Similarly, because the non-selected UEs do not compute their local models, we have 
\begin{align}
  \forall k\in\NN: a_k = 0, t_{c,k} = 0.
\end{align}

\subsubsection{Step (S3)}
This step has the following two phases, both executed within a coherence block.

\textbf{Uplink channel estimation}: This phase uses the same pilot assignment and channel estimation techniques in the uplink channel estimation of Step (S1). The MMSE estimate $\bar{g}_{mk}$ of $g_{mk}$ is thus a random variable distributed according to $\CN(0,\bar{\sigma}_{mk}^2)$, where
$\bar{\sigma}_{mk}^2=\frac{\tau_{t}\rho_{t}(\beta_{mk})^2}
{\sum_{\ell\in\NN}\tau_{t}\rho_{t}\beta_{m\ell}|\VARPHI_k^H\VARPHI_\ell|^2+1}$.

\textbf{Uplink transmission}: Similar to Step (S1), for ease of presentation, a symbol $s_{u,k} \sim \CN(0,1)$ is encoded for the local model update of a UE $k$.
The symbol $s_{u,k}$ is then allocated a transmit amplitude value $\sqrt{\rho_{u}\zeta_k}$ to generate a baseband signal $x_{u,k}$ for wireless transmission, i.e., $x_{u,k}=\sqrt{\rho_{u}\zeta_k}s_{u,k}$. The UE $k$ is subjected to the average transmit power constraint, i.e., $\EEE\left\{|x_{u,k}|^2\right\}\leq \rho_u$, which can also be expressed as a per-UE constraint:
\begin{align}\label{power:u:cons}
0\leq\zeta_k\leq 1,\forall k\in\NN.
\end{align}
Because no power should be allocated to the non-selected UEs, we have
\begin{align}\label{a:UL}
\forall k\in\NN: \text{if}\,\, a_k = 0,\,\,\text{then}\,\, \zeta_{k} = 0.
\end{align}
The achievable uplink data rate $R_{u,k}$ at the CPU for the UE $k$ is then given in \eqref{R:u} shown at the top of next page, where $\ZETA\triangleq\{\zeta_{k}\}_{k\in\NN}$ \cite[Eq. (27)]{ngo17TWC}.
\begin{figure*}[t!]
\begin{align}\label{R:u}
R_{u,k}(\ZETA)\!\!=\!\!\frac{\tau_c\!-\!\tau_t}{\tau_c}B \log_2 \!\!\Bigg(\!1+\!
\frac
{\rho_u\zeta_{k}\left(\sum_{m\in\MM}\bar{\sigma}_{mk}^2\right)^2}
{\underbrace{\rho_u\!\!\!\sum_{\ell\in\NN\setminus k}\!\!\!\zeta_{\ell}
\big(\!\!\sum_{m\in\MM}\!\! \bar{\sigma}_{mk}^2\frac{\beta_{m\ell}}{ \beta_{mk}}\big)^2
|\VARPHI_k^H\VARPHI_\ell|^2}_{\text{Pilot contamination}}
\!+\!\underbrace{\rho_u\!\sum_{\ell\in\NN}\zeta_{\ell}\!\!\sum_{m\in\MM}\!\!\bar{\sigma}_{mk}^2\beta_{m\ell}}_{\text{Inter-user interference}}
\!+\!\sum_{m\in\MM}\bar{\sigma}_{mk}^2}
\Bigg)
\end{align}
\hrulefill
\end{figure*}

Let $S_u$ (bits) and $R_{u,k}$ (bps) be the same data size of the local model updates and the data rate of transmitting the local model update from a UE $k$ to the CPU, respectively. The transmission time from the UE $k$ to the APs is given by
\begin{align}\label{latency:u:wireless}
t_{u,k}(a_k,\ZETA) = \frac{a_k S_u}{R_{u,k}(\ZETA)},
\end{align}
where $a_k \neq 0$. Since the non-selected UEs do not have any uplink transmission, we have $\forall k\in\NN: a_k = 0, t_{u,k} = 0$.

\subsubsection{Step (S4)} The CPU computes the global update by using all the received local updates. Since the CPU's computational capability is much higher than that of the UEs, the time required to compute the global update is assumed negligible.


\vspace{-0mm}
\section{{Joint UE Selection and Resource Allocation for Minimizing Total FL Execution Time in CFmMIMO: Optimization Problem Formulation}}
\label{sec:PF}

\subsection{{Total Execution Time of an FL Process}}
{
The execution time of one FL communication round comprises the execution times of Steps (S1)--(S3) [as Step (S4) is assumed not to incur any time delay]. The execution time of each of these steps is the longest execution time within that step. As such, the execution time of one FL communication round is expressed as
\begin{align}\label{time:oneiteration}
\nonumber
&T_o(\aaa,\ETA,\f,\ZETA)
\\
\nonumber
&\triangleq \max_{k}t_{d,k}(a_k,\ETA)
+ \max_{k}t_{c,k}(a_k,f_k)
+ \max_{k}t_{u,k}(a_k,\ZETA)
\\
&\triangleq \max_{k}\frac{a_k S_d}{R_{d,k}(\ETA)}
+ \max_{k}\frac{a_kLD_kc_k}{f_k}
+ \max_{k}\frac{a_k S_u}{R_{u,k}(\ZETA)},
\end{align}
where $k\in\{k\in\NN | \,\, a_k \neq 0\}$. Note that $\aaa$ (UE selection) is determined before an FL process is executed, while $\ETA,\f$, and $\ZETA$ are optimized in each FL communication round. The mathematical expression of the execution time of one FL communication round must have a two-timescale structure. 
Therefore, 
we introduce a new metric named ``effective execution time of one FL communication round" $\EEE\{T_o(\aaa,\ETA,\f,\ZETA)\}$, which is the average of $T_o(\aaa,\ETA,\f,\ZETA)$ over large-scale fading realizations. The values of $\aaa$ in $\EEE\{T_o(\aaa,\ETA,\f,\ZETA)\}$ remains unchanged, while those of $\ETA,\f$, and $\ZETA$ in $T_o(\aaa,\ETA,\f,\ZETA)$ are optimized in each large-scale coherence time. 
}

{Let $G(\aaa)$ be the number of communication rounds in an FL process.} By using the tightest convergence rates of FL \cite[Theorem~I]{sai20ICML}, we have
\begin{align}\label{G}
    G(\aaa) = \frac{q}{\sum_{k\in\NN}a_k}.
\end{align}
Here, $q$ is a known constant which depends on the specific characteristics of the FL learning problems. The total execution time of an FL process is then given by
\begin{align}
\label{Tall}
T_e(\aaa,\ETA,\f,\ZETA)\triangleq G(\aaa) \EEE\{T_o(\aaa,\ETA,\f,\ZETA)\}.
\end{align}

{
Eqs.~\eqref{time:oneiteration}--\eqref{Tall} clearly shows the effects of UE selection on the overall execution time. According to \eqref{time:oneiteration}, if we only allow users with favorable link conditions (i.e., strong channel gains, weak pilot contamination and interference) to partake in the FL process, the execution time of one FL communication round is shorter. However, selecting only a subset of all UEs also means increasing the required number of communication rounds as shown in \eqref{G}. Therefore, minimizing the total execution time in \eqref{Tall} involves finding an optimal set of UEs to be selected, in order to balance between the two contradicting effects.}

\vspace{-0mm}
\subsection{Problem Formulation}

We are now ready to formulate the main design problem as the following optimization problem.
\begin{subequations}\label{mainP1}
\begin{align}
\label{CF:mainP1}
\underset{\aaa,\ETA,\f,\ZETA}{\min} \,\,
&T_e(\aaa,\ETA,\f,\ZETA)
\\
\nonumber
\mathrm{s.t.}\,\,
&
\eqref{a}-\eqref{a:DL}, \eqref{power:u:cons},\eqref{a:UL}
\\
\label{cons:eta}
& 0\leq \eta_{mk}, \forall m,k
\\
\label{cons:zeta}
& 0\leq \zeta_{k}, \forall k
\\
\label{fbound}
& 0\leq f_k \leq f_{k,max}, \forall k.
\end{align}
\end{subequations}
Problem \eqref{mainP1} has a nonconvex stochastic, mixed-integer mixed-timescale structure, along with binary constraints and tight coupling among the optimizing variables. Finding its globally optimal solution is challenging.

\vspace{-0mm}
\section{Joint UE Selection and Resource Allocation for Minimizing Total FL Execution Time in CFmMIMO: Proposed Algorithm}
\label{sec:alg}
First, to deal with the binary constraint \eqref{a}, we observe that $x\in\{0,1\}\Leftrightarrow x\in[0,1]\, \text{and} \, x-x^2\leq0$ \cite{vu18TCOM,vu18TWC}. Therefore, \eqref{a} is equivalent to the following two constraints
\begin{align}
\label{suma}
& \sum_{k\in\NN}(a_k-a_k^2)\leq 0
\\
\label{arelax}
&0\leq a_k\leq 1, \forall k.
\end{align}
Since the values of $\{a_k\}$ in \eqref{suma} and \eqref{arelax} are now real, it is easier to handle \eqref{suma} and \eqref{arelax} than \eqref{a}.
We then use \eqref{suma} and \eqref{arelax} to rewrite problem \eqref{mainP1} as
\begin{align}\label{mainP1:equiv}
\underset{\aaa,\ETA,\f,\ZETA}{\min} \,\,
&T_e(\aaa,\ETA,\f,\ZETA)
\\
\nonumber
\mathrm{s.t.}\,\,
&
\eqref{tildeN}-\eqref{a:DL}, \eqref{power:u:cons},\eqref{a:UL},
\eqref{cons:eta}-\eqref{fbound},
\eqref{suma}, \eqref{arelax}.
\end{align}

Next, to deal with the $\max$ functions in $T_o(\aaa,\ETA,\f,\ZETA)$, we rewrite problem \eqref{mainP1:equiv} in a more tractable epigraph form as
\begin{subequations}\label{mainP1:epi}
\begin{align}
\underset{\x}{\min} \,\,
&G(\aaa)\EEE\{\widetilde{T}_o(t_d,t_c,t_u)
\}
\\
\nonumber
\mathrm{s.t.}\,\,
&
\eqref{tildeN}-\eqref{a:DL}, \eqref{power:u:cons},\eqref{a:UL},
\eqref{cons:eta}-\eqref{fbound},
\eqref{suma}, \eqref{arelax}
\\
\label{rlowerbound}
&0 \leq r_{d,k}, 0 \leq r_{u,k}, \forall k
\\
\label{rdupperbound}
&r_{d,k}\leq R_{d,k}(\ETA), \forall k
\\
\label{ruupperbound}
&r_{u,k}\leq R_{u,k}(\ZETA), \forall k
\\
\label{td}
&\frac{a_k S_d}{r_{d,k}} \leq t_d, \forall k, a_k \neq 0
\\
\label{tc}
&\frac{a_k L D_kc_k}{f_k} \leq t_c, \forall k, a_k \neq 0
\\
\label{tu}
& \frac{a_kS_u}{r_{u,k}} \leq t_u, \forall k, a_k \neq 0,
\end{align}
\end{subequations}
where $\widetilde{T}_o(t_d,t_c,t_u)=t_d + t_c +t_u$, $\x\triangleq\{\aaa,\ETA,\f,\ZETA,\rr_d,\rr_u,t_d,\\ t_c,t_u\}$;
$\rr_d \triangleq \{r_{d,k}\}$, $\rr_u \triangleq \{r_{u,k}\}, \forall k \in \NN$,
$ t_d, t_c$, and $t_u$ are additional variables. 

Our main problem \eqref{mainP1} is now transformed to problem \eqref{mainP1:epi} which is a stochastic nonconvex optimization problem. In \eqref{mainP1:epi}, the UE selection variables $\aaa$ are optimized in a long-term timescale (before any FL process happens), while $\widetilde{\x}\triangleq\x\setminus \aaa$ including transmit power and processing frequency are optimized in a short-term timescale (in each FL communication round). To solve this type of optimization problem, we adopt the general framework of \cite{liu18TSP}. Specifically, we decompose problem \eqref{mainP1:epi} into a short-term subproblem and a long-term master problem, and solve these resulting problems in an alternating manner. The mathematical derivations are detailed in the following.

For a given $\aaa$, in each large-scale coherence time, the \emph{short-term subproblem} is expressed as:
\begin{align}\label{shortP}
\underset{\widetilde{\x}}{\min} \,\,
&\widetilde{T}_o(t_d,t_c,t_u)
\\
\nonumber
\mathrm{s.t.}\,\,
&
\eqref{power:d:cons},\eqref{a:DL}, \eqref{power:u:cons},\eqref{a:UL},
\eqref{cons:eta}-\eqref{fbound},
\eqref{rlowerbound}-\eqref{tu}.
\end{align}
For given optimal solutions $\widetilde{\x}$ to problems \eqref{shortP}, we have $t_d=\frac{a_{k^*}S_d}{ R_{d,k^*}}$, where $k^*\triangleq\underset{k\in\NN}{\mathrm{argmax}}\frac{a_kS_d}{ R_{d,k}}$.
Therefore, we can obtain $t_d = \aaa^T\tilde{\ttt}_d$, where $\tilde{\ttt}_d$ is the vector whose elements are $0$ except for the $k^*$-th element, and the value of this element is $\frac{S_d}{ R_{d,k^*}}$. Similarly, we have $t_c^*=\frac{a_{i^*} L D_{i^*}c_{i^*}}{ f_{i^*}}$ and $t_u = \frac{a_{j^*}S_u}{ R_{u,j^*}}$, where $i^*\triangleq \underset{k\in\NN}{\mathrm{argmax}}\frac{a_k L D_kc_k}{ f_k}$ and $j^* \triangleq \underset{k\in\NN}{\mathrm{argmax}}\frac{a_k S_u}{ R_{u,k}}$. Then, we also have $t_c = \aaa^T\tilde{\ttt}_c$ and $t_u = \aaa^T\tilde{\ttt}_u$. Here,  $\tilde{\ttt}_c$ is the vector whose elements are $0$ except for the $i^*$-th element, and the value of this element is $\frac{L D_{i^*}c_{i^*}}{ f_{i^*}}$. $\tilde{\ttt}_u$ is the vector whose elements are $0$ except for the $j^*$-th element, and the value of this element is $\frac{S_u}{w_{j^*} R_{u,j^*}}$.
Then, the \emph{long-term master problem} is expressed as:
\begin{align}\label{longP}
\underset{\aaa}{\min} \,\,
&g(\aaa)
\\
\nonumber
\mathrm{s.t.}\,\,
&
\eqref{tildeN},\eqref{suma},\eqref{arelax}.
\end{align}
where 
\begin{align}
\nonumber
g(\aaa) &\triangleq T_e(\aaa) = \EEE\{T(\aaa)\}
\\
\nonumber
T(\aaa) &\triangleq G(\aaa)\widetilde{T}_o(\aaa)= \frac{q(\aaa^T\tilde{\ttt}_d
+\aaa^T\tilde{\ttt}_c
+\aaa^T\tilde{\ttt}_u)}{\aaa^T\pmb{1}}
\\
\nonumber
\widetilde{T}_o(\aaa) &=\aaa^T\tilde{\ttt}_d
+\aaa^T\tilde{\ttt}_c
+\aaa^T\tilde{\ttt}_u
\\
\nonumber
G(\aaa)&=\frac{q}{\aaa^T\pmb{1}},
\end{align}
and $\pmb{1}\in\RRR^N$ is an all-one vector.

\subsection{Solving Short-term Subproblem \eqref{shortP}}\label{subsec:STO}
\vspace{-0mm}
First, we rewrite problem \eqref{shortP} as
\vspace{-0mm}
\begin{subequations}\label{shortP:epi}
\begin{align}
\label{CF:shortP:epi}
\underset{\widehat{\x}}{\min} \,\,
&\widetilde{T}_o(t_d,t_c,t_u)
\\
\mathrm{s.t.}\,\,
\nonumber
&
\eqref{cons:eta}-\eqref{fbound},
\eqref{rlowerbound}-\eqref{tu}
\\
\label{power:d:cons:1}
& \sigma_{mk}^2\eta_{mk} \leq \tilde{v}_{mk}, \forall m,k
\\
\label{power:d:cons:2}
& \tilde{v}_{mk}\leq a_k, \forall m,k
\\
\label{power:d:cons:3}
& \sum_{k\in\NN}\tilde{v}_{mk}\leq 1, \forall m
\\
\label{power:u:cons:1}
& \zeta_{k}\leq a_k, \forall k,
\end{align}
\end{subequations}
where $\widehat{\x}\triangleq\{\widetilde{\x},\tilde{\vv}\}$ and $\tilde{\vv}\triangleq\{\tilde{v}_{mk}\}_{m\in\MM,k\in\NN}$ are additional variables. Here, \eqref{power:d:cons:1}--\eqref{power:d:cons:3} follow from \eqref{power:d:cons} and \eqref{a:DL}, while \eqref{power:u:cons:1} follows from \eqref{power:u:cons} and \eqref{a:UL}. Problem \eqref{shortP:epi} is still challenging because of the nonconvex constraints \eqref{rdupperbound} and
\eqref{ruupperbound}. To deal with these constraints, we let
$\vv \triangleq \{v_{mk}\}_{m\in\MM,k\in\NN}$ and $\uu\triangleq \{u_k\}_{k\in\NN}$ with
\vspace{-0mm}
\begin{align}\label{v}
v_{mk}&\triangleq \eta_{mk}^{1/2}, \forall m,k,
\\
\label{u}
u_k &\triangleq \zeta_k^{1/2},\forall k,
\end{align}
and rewrite \eqref{shortP:epi} as
\vspace{-0mm}
\begin{subequations}\label{shortP:equi}
\begin{align}
\label{CF:shortP:equi}
\underset{\bar{\x}}{\min} \,\,
&\widetilde{T}_o(t_d,t_c,t_u)
\\
\nonumber
\mathrm{s.t.}\,\,
&\eqref{fbound},\eqref{rlowerbound},\eqref{td}-\eqref{tu},
\eqref{power:d:cons:2}, \eqref{power:d:cons:3}
\\
\label{power:d:cons:1:new}
& \sigma_{mk}^2v_{mk}^2 \leq\tilde{v}_{mk}, \forall m,k
\\
\label{cons:v}
& 0\leq v_{mk},\forall m,k
\\
\label{power:u:cons:1:new}
& u_{k}^2\leq a_k, \forall m,k
\\
\label{cons:u}
& 0\leq u_k\leq 1,\forall k
\\
\label{cons:Rd:new}
& 0\leq r_{d,k} \leq R_{d,k}(\vv), \forall k
\\
\label{cons:Ru:new}
& 0\leq r_{u,k} \leq R_{u,k}(\uu), \forall k.
\end{align}
\end{subequations}
where $\bar{\x}\triangleq\{\widehat{\x},\vv,\uu\}\setminus\{\ETA,\ZETA\}$. Here, \eqref{power:d:cons:1:new}-\eqref{cons:u} follow \eqref{cons:eta}, \eqref{cons:zeta}, \eqref{power:d:cons:1}, \eqref{power:u:cons:1}, \eqref{v}, \eqref{u}, whereas
\eqref{cons:Rd:new} and \eqref{cons:Ru:new} follow \eqref{rdupperbound} and
\eqref{ruupperbound}.

Regarding the nonconvex constraints \eqref{cons:Rd:new} and \eqref{cons:Ru:new}, the concave lower bound $\widetilde{R}_{d,k}(\vv)$ of $R_{d,k}(\vv)$ is given by \eqref{hd:apprx} \cite{nguyen17TCOM} (see the top of the next page),
\begin{figure*}
\begin{align}\label{hd:apprx}
\widetilde{R}_{d,k}(\vv) \!\triangleq\! \frac{\tau_c\!-\!\tau_t}{\tau_c\log 2}B \Bigg[
\log\left(1\!+\!\frac{(\Upsilon_k^{(\kappa)})^2}{\Pi_k^{(\kappa)}}\right)
\!-\!\frac{(\Upsilon_k^{(\kappa)})^2}{\Pi_k^{(\kappa)}}
\!+\!2\frac{\Upsilon_k^{(\kappa)}\Upsilon_k}{\Pi_k^{(\kappa)}}
\!-\! \frac{(\Upsilon_k^{(\kappa)})^2(\Upsilon_k^2+\Pi_k)}{\Pi_k^{(\kappa)}((\Upsilon_k^{(\kappa)})^2+\Pi_k^{(\kappa)})}\Bigg]
\!\leq\! R_{d,k}(\vv),
\end{align}
\end{figure*}
where 
\begin{align}
\nonumber
&\Upsilon_k(\{v_{mk}\}_{m\in\MM}) = \sqrt{\rho_d}
\sum_{m\in\MM}v_{mk}\sigma_{mk}^2 
\\
\nonumber
& \Pi_k(\vv) =
\rho_d\sum_{\ell\in\NN\setminus k}
\big(\sum_{m\in\MM}v_{m\ell}\sigma_{m\ell}^2\frac{\beta_{mk}}{\beta_{m\ell}}\big)^2
|\VARPHI_\ell^H\VARPHI_k|^2
\\
\nonumber
& \qquad \qquad +\rho_d\sum_{\ell\in\NN}\sum_{m\in\MM}v_{m\ell}^2\sigma_{m\ell}^2\beta_{mk}+1.
\end{align}
Similarly, the concave lower bound $\widetilde{R}_{u,k}(\uu)$ of $R_{u,k}(\uu)$ is given by \eqref{hu:apprx} at the top of the next page \cite{nguyen17TCOM}, where 
\vspace{-0mm}
\begin{figure*}
\begin{align}\label{hu:apprx}
\widetilde{R}_{u,k}(\uu)\! \triangleq\!
& \frac{\tau_c\!-\!\tau_t}{\tau_c\log 2}B \Bigg[ \log\left(1\!+\!\frac{(\Psi_k^{(\kappa)})^2}{\Xi_k^{(\kappa)}}\right)
\!-\!\frac{(\Psi_k^{(\kappa)})^2}{\Xi_k^{(\kappa)}}+
2\frac{\Psi_k^{(\kappa)}\Psi_k}{\Xi_k^{(\kappa)}}
\!-\! \frac{(\Psi_k^{(\kappa)})^2(\Psi_k^2+\Xi_k)}{\Xi_k^{(\kappa)}((\Psi_k^{(\kappa)})^2+\Xi_k^{(\kappa)})} \Bigg]
\!\leq\! R_{u,k}(\uu),
\end{align}
\hrulefill
\end{figure*}
\begin{align}
\nonumber
& \Psi_k(u_k) =
\sqrt{\rho_u} u_k(\sum_{m\in\MM}\sigma_{mk}^2)
\\
\nonumber
& \Xi_k(\uu) =
\rho_u\sum_{\ell\in\NN\setminus k}u_{\ell}^2
\big(\sum_{m\in\MM}\sigma_{mk}^2\frac{\beta_{m\ell}}{\beta_{mk}}\big)^2
|\VARPHI_k^H\VARPHI_\ell|^2
\\
\nonumber
& \qquad\qquad +\rho_u\sum_{\ell\in\NN}u_{\ell}^2\sum_{m\in\MM}\sigma_{mk}^2\beta_{m\ell}
+\sum_{m\in\MM}\sigma_{mk}^2.
\end{align}
As such, \eqref{cons:Rd:new} and \eqref{cons:Ru:new} can be approximated by the following convex constraints
\vspace{-0mm}
\begin{align}
\label{cons:Rd:new:appr}
r_{d,k}&\leq \widetilde{R}_{d,k}(\vv), \forall k\in\NN
\\
\label{cons:Ru:new:appr}
r_{u,k}&\leq \widetilde{R}_{u,k}(\uu), \forall k\in\NN.
\end{align}

At iteration $(\kappa+1)$, for a given point $\bar{\x}^{(\kappa)}$, problem \eqref{shortP:equi} (hence \eqref{shortP}) can finally be approximated by the following convex problem:
\vspace{-0mm}
\begin{align}\label{mainP:appr}
\underset{\bar{\x}\in\widetilde{\FF}}{\min} \,\,
\widetilde{T}_o(t_d,t_c,t_u),
\end{align}
where $\widetilde{\FF}\!\triangleq\!\{\eqref{fbound},
\eqref{rlowerbound},\eqref{td}-\eqref{tu},
\eqref{power:d:cons:2}, \eqref{power:d:cons:3},
\eqref{cons:v}-\eqref{cons:u},\eqref{cons:Rd:new:appr}, \eqref{cons:Ru:new:appr}
\}$ is a convex feasible set.
In Algorithm~\ref{alg:2}, we outline the main steps to solve problem \eqref{shortP}.
Let $\FF\triangleq\{\eqref{fbound},
\eqref{rlowerbound},\eqref{td}-\eqref{tu},
\eqref{power:d:cons:2},\eqref{power:d:cons:3},
\eqref{power:d:cons:1:new}-\eqref{cons:Ru:new}\}$ be the feasible set of \eqref{shortP:equi}.
Starting from a random point $\bar{\x}\in\FF$, we solve \eqref{mainP:appr} using CVX \cite{cvx} to obtain its optimal solution $\bar{\x}^*$. This solution is then used as an initial point in the next iteration. The algorithm terminates when an accuracy level of $\varepsilon$ is reached.
\begin{proposition}
\label{shortP:Prop}
Algorithm~\ref{alg:2} converges to a Karush-Kuhn-Tucker (KKT) solution of \eqref{shortP}.
\end{proposition}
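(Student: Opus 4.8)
The proof plan is to recognize Algorithm~\ref{alg:2} as an instance of successive convex approximation (SCA, equivalently inner convex approximation) applied to the reformulated short-term problem \eqref{shortP:equi}, and to verify the three surrogate conditions that guarantee convergence to a KKT point. Since the objective $\widetilde{T}_o=t_d+t_c+t_u$ is linear and the only nonconvex constraints are \eqref{cons:Rd:new} and \eqref{cons:Ru:new}, all the approximation is confined to replacing the nonconvex rate constraints by the concave surrogate constraints \eqref{cons:Rd:new:appr}, \eqref{cons:Ru:new:appr}. I would therefore establish, for each $k$, the following three properties of $\widetilde{R}_{d,k}$ relative to $R_{d,k}$ (and analogously for $\widetilde{R}_{u,k}$) at the expansion point $\bar{\x}^{(\kappa)}$: (i) global lower bound, $\widetilde{R}_{d,k}(\vv)\le R_{d,k}(\vv)$ for all $\vv$; (ii) value consistency, $\widetilde{R}_{d,k}(\vv^{(\kappa)})=R_{d,k}(\vv^{(\kappa)})$; and (iii) gradient consistency, $\nabla\widetilde{R}_{d,k}(\vv^{(\kappa)})=\nabla R_{d,k}(\vv^{(\kappa)})$.

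Property (i) is exactly the inequality asserted in \eqref{hd:apprx}; it follows from the jointly-concave tangent lower bound for functions of the form $\log(1+x^2/y)$ established in \cite{nguyen17TCOM}, applied with $\Upsilon_k$ linear and $\Pi_k$ convex in $\vv$, which simultaneously renders $\widetilde{R}_{d,k}$ concave so that \eqref{mainP:appr} is genuinely convex. Properties (ii) and (iii) I would verify by direct substitution $\vv=\vv^{(\kappa)}$: this makes $\Upsilon_k=\Upsilon_k^{(\kappa)}$ and $\Pi_k=\Pi_k^{(\kappa)}$, the final rational term of \eqref{hd:apprx} reduces to $(\Upsilon_k^{(\kappa)})^2/\Pi_k^{(\kappa)}$, and the four bracketed terms collapse to $\log(1+(\Upsilon_k^{(\kappa)})^2/\Pi_k^{(\kappa)})$ times the prefactor, i.e. exactly $R_{d,k}(\vv^{(\kappa)})$; differentiating and again evaluating at $\vv^{(\kappa)}$ gives (iii).

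With (i)--(iii) in hand, the remainder is the standard SCA argument. First, (i) gives $\widetilde{\FF}\subseteq\FF$, so every iterate $\bar{\x}^{(\kappa+1)}$ is feasible for \eqref{shortP:equi}; (ii) ensures that $\bar{\x}^{(\kappa)}$ itself is feasible for the convex subproblem solved at iteration $\kappa+1$, whence optimality of $\bar{\x}^{(\kappa+1)}$ yields $\widetilde{T}_o(\bar{\x}^{(\kappa+1)})\le\widetilde{T}_o(\bar{\x}^{(\kappa)})$; since $\widetilde{T}_o\ge0$, the monotone bounded sequence of objective values converges. I would then invoke the inner-approximation convergence argument (as in \cite{nguyen17TCOM}): properties (i)--(iii), together with a constraint qualification (Slater's condition, which holds for \eqref{mainP:appr} since strictly feasible points exist, e.g. small positive powers and $f_k\in(0,f_{k,\max})$), imply that every limit point $\bar{\x}^\star$ of $\{\bar{\x}^{(\kappa)}\}$ is a KKT point of \eqref{shortP:equi}. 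Finally, I would transfer this to \eqref{shortP}: the substitutions \eqref{v}--\eqref{u}, namely $v_{mk}=\eta_{mk}^{1/2}$ and $u_k=\zeta_k^{1/2}$, are smooth bijections on the nonnegative orthant, so a KKT point of \eqref{shortP:equi} maps to a KKT point of \eqref{shortP}, while the auxiliary epigraph variables $t_d,t_c,t_u$ recover the original $\max$-structured objective.

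The main obstacle I anticipate is two-fold and both parts concern the gradient-consistency step and its consequence. Algebraically, verifying (iii) for the fraction-of-logarithms surrogate requires differentiating the last two rational terms of \eqref{hd:apprx} in $\vv$ and showing that the cross terms cancel so that $\nabla\widetilde{R}_{d,k}(\vv^{(\kappa)})=\nabla R_{d,k}(\vv^{(\kappa)})$ holds exactly; this is where a stray sign or factor would invalidate the entire KKT claim. Conceptually, the harder point is guaranteeing that limit points are genuinely KKT rather than merely stationary for the surrogate: this needs the multipliers of the convex subproblems to remain bounded, which I would secure by checking that a constraint qualification holds uniformly along the sequence (again via Slater-type strict feasibility), so that a convergent subsequence of primal--dual iterates exists and its limit satisfies the KKT system of \eqref{shortP}.
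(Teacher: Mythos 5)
Your proposal is correct and follows essentially the same route as the paper: verify that the surrogate rates $\widetilde{R}_{d,k}$ and $\widetilde{R}_{u,k}$ satisfy the three standard inner-approximation properties (global lower bound, value consistency, gradient consistency), check Slater's condition on the convex feasible set, invoke the classical inner-approximation convergence theorem to get a KKT point of \eqref{shortP:equi}, and map it back to \eqref{shortP} through the substitutions \eqref{v}--\eqref{u}. The paper simply asserts these properties and cites \cite{Marks78OR}, whereas you sketch their verification explicitly, but the argument is the same.
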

\begin{proof}
It is true that $\widetilde{R}_{d,k}(\vv)$ and $\widetilde{R}_{u,k}(\uu)$ satisfy the key properties of general inner approximation functions \cite[Properties (i), (ii), and (iii)]{Marks78OR}. The feasible set $\widetilde{\FF}$ also satisfies the Slater's constraint qualification condition for convex programs.
Therefore, Algorithm~\ref{alg:2} converges to a KKT solution of \eqref{shortP:equi} when starting from a point $\widetilde{\x}^{(0)}\in\FF$ \cite[Theorem 1]{Marks78OR}. By using the variable transformations \eqref{v} and \eqref{u}, it can be seen that the KKT solutions of \eqref{shortP:equi} satisfy the KKT conditions of \eqref{shortP:epi} as well as of \eqref{shortP}.
\end{proof}

\begin{algorithm}[!t]
\caption{Successive convex approximation approach for solving the short-term subproblem \eqref{shortP}}
\begin{algorithmic}[1]\label{alg:2}
\STATE \textbf{Initialization}: Set $\kappa=0$ and choose a random point $\bar{\x}^{(0)}\in\FF$.
\REPEAT
\STATE Update $\kappa=\kappa+1$
Solve the approximated problem \eqref{mainP:appr} of \eqref{shortP} using CVX to obtain its optimal solution $\bar{\x}^*$ and update $\bar{\x}^{(\kappa)}=\bar{\x}^*$
\UNTIL{convergence}
\end{algorithmic}
\end{algorithm}
\vspace{-4mm}

\subsection{Solving the Long-Term Master Problem \eqref{longP}}

Let $V(\aaa)\triangleq \sum_{k\in\NN}(a_k-a_k^2)=\aaa^T(\pmb{1}-\aaa)$, then \eqref{suma} becomes $V(\aaa)\leq 0$. To deal with this non-convex constraint, we consider the problem
\begin{align}\label{longP:appr:relax}
\underset{\aaa}{\min} \,\,
&\LL(\aaa)\triangleq g(\aaa)+\lambda V(\aaa)
\\
\nonumber
\mathrm{s.t.}\,\,
& \eqref{tildeN}, \eqref{arelax},
\end{align}
where $\LL(\aaa)$ is the Lagrangian of \eqref{longP}, $\lambda \geq 0$ is the Lagrangian multiplier  corresponding to \eqref{suma}.
Let $\HHH\triangleq\{\eqref{tildeN},\eqref{arelax}\}$ be the feasible set of  problem \eqref{longP:appr:relax}.
\begin{proposition}
\label{Prop:LongP1}
The following statement holds:
\renewcommand{\labelenumi}{(\roman{enumi})}
\begin{enumerate}
\item {The value $V_\lambda$ of $V$ at the solution of \eqref{longP:appr:relax} corresponding to $\lambda$ is decreasing to $0$ as $\lambda\rightarrow +\infty$}.
  \item Problem \eqref{longP:appr:relax} has the following property, i.e.,
\begin{equation}\label{Strong:Dualitly:hold}
\underset{\aaa\in\HHH}{\min}\,\,
g(\aaa)
=
\underset{\lambda\geq0}{\sup}\,\,
\underset{\aaa\in\widehat{\HHH}}{\min}\,\,
\LL(\aaa,\lambda)
\end{equation}
and is therefore equivalent to \eqref{longP} at the optimal solution $\lambda^*\geq0$ of the sup-min problem in \eqref{Strong:Dualitly:hold}.
\end{enumerate}
\end{proposition}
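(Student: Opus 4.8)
The plan is to read \eqref{longP:appr:relax} as the partial Lagrangian dual of \eqref{longP} obtained by dualizing only the concave constraint $V(\aaa)\le0$ while retaining the polyhedral, hence convex and compact, set $\widehat{\HHH}\triangleq\{\eqref{tildeN},\eqref{arelax}\}$. The one structural fact that drives the whole argument is that $\widehat{\HHH}\subseteq[0,1]^N$ forces $V(\aaa)=\sum_{k\in\NN}a_k(1-a_k)\ge0$ on $\widehat{\HHH}$, with equality exactly at the binary points; hence feasibility for \eqref{longP} is equivalent to $V(\aaa)=0$, and \eqref{longP} is feasible as soon as $N_{\text{QoL}}\le N$ (take any binary $\aaa^\circ\in\widehat{\HHH}$). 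I would also record at the outset that $g$ is continuous and $\widehat{\HHH}$ is compact — the denominator $\aaa^T\pmb{1}\ge N_{\text{QoL}}>0$ stays bounded away from zero — so every penalized problem attains a minimizer $\aaa_\lambda$ and $g$ is bounded on $\widehat{\HHH}$.

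For part (i), I would take minimizers $\aaa_{\lambda_1},\aaa_{\lambda_2}$ for $\lambda_1<\lambda_2$ and add the two optimality inequalities $\LL(\aaa_{\lambda_1},\lambda_1)\le\LL(\aaa_{\lambda_2},\lambda_1)$ and $\LL(\aaa_{\lambda_2},\lambda_2)\le\LL(\aaa_{\lambda_1},\lambda_2)$; the $g$-terms cancel and what remains rearranges to $(\lambda_1-\lambda_2)\big(V(\aaa_{\lambda_1})-V(\aaa_{\lambda_2})\big)\le0$, so $V_{\lambda_1}\ge V_{\lambda_2}$ and $V_\lambda$ is non-increasing. For the limit I would compare $\aaa_\lambda$ against the binary $\aaa^\circ$: since $V(\aaa^\circ)=0$, optimality gives $\lambda V(\aaa_\lambda)\le g(\aaa^\circ)-g(\aaa_\lambda)\le C$ with $C\triangleq g(\aaa^\circ)-\min_{\aaa\in\widehat{\HHH}}g<\infty$, and $V(\aaa_\lambda)\ge0$ then forces $0\le V_\lambda\le C/\lambda\to0$.

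For part (ii), weak duality $\sup_{\lambda\ge0}\min_{\aaa\in\widehat{\HHH}}\LL(\aaa,\lambda)\le\min_{\aaa\in\HHH}g(\aaa)$ (the optimal value of \eqref{longP}) is immediate, since for any $\aaa$ feasible in \eqref{longP} and any $\lambda\ge0$ we have $\LL(\aaa,\lambda)=g(\aaa)+\lambda V(\aaa)\le g(\aaa)$. For the reverse inequality I would exploit (i): the dual function $\phi(\lambda)\triangleq\min_{\aaa\in\widehat{\HHH}}\LL(\aaa,\lambda)$ is a pointwise minimum of maps non-decreasing in $\lambda$ (as $V\ge0$ on $\widehat{\HHH}$), hence itself non-decreasing and bounded above by the primal value, so $\phi(\lambda)\uparrow d^*=\sup_{\lambda\ge0}\phi(\lambda)$. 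Extracting by compactness a subsequence $\aaa_\lambda\to\aaa^\star$, continuity and (i) give $V(\aaa^\star)=0$, i.e. $\aaa^\star$ is feasible for \eqref{longP}; then $\phi(\lambda)=g(\aaa_\lambda)+\lambda V(\aaa_\lambda)\ge g(\aaa_\lambda)\to g(\aaa^\star)\ge\min_{\aaa\in\HHH}g$ forces $d^*\ge\min_{\aaa\in\HHH}g$, which with weak duality establishes \eqref{Strong:Dualitly:hold}.

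To upgrade this to attainment at a \emph{finite} $\lambda^*$ — so that solving \eqref{longP:appr:relax} once at $\lambda^*$ already returns an optimal binary $\aaa$ of \eqref{longP} — I would invoke exactness of the concave penalty $\lambda V$: since $V$ is concave and the only points of the polytope $\widehat{\HHH}$ with $V=0$ are its binary vertices, there is a finite threshold beyond which every minimizer of $\LL(\cdot,\lambda)$ satisfies $V=0$ and hence solves \eqref{longP}, which is precisely what lets the two-timescale framework of \cite{liu18TSP} apply. \textbf{I expect this final step to be the main obstacle}: the monotonicity and the asymptotic strong-duality arguments are soft and survive the nonconvexity of $g$, whereas producing a finite $\lambda^*$ — rather than only a limit as $\lambda\to\infty$ — genuinely relies on the polyhedral and concave structure and requires making the exact-penalty threshold quantitative.
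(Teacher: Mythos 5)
Your argument is correct and, unlike the paper, actually self-contained: the paper's ``proof'' of Proposition~\ref{Prop:LongP1} is a one-line citation to \cite[Proposition 1]{vu18TWC} and \cite[Proposition 1]{vu18TCOM}, so you are reconstructing the standard concave-penalty argument that those references rely on. The key observation you isolate --- $V(\aaa)=\sum_k a_k(1-a_k)\ge 0$ on $[0,1]^N$ with equality exactly at binary points, so feasibility for \eqref{longP} is $V=0$ --- is the right one; the two-inequality swap for monotonicity of $V_\lambda$, the bound $0\le V_\lambda\le C/\lambda$ from comparison with a binary feasible point, and the weak-duality-plus-compactness argument for \eqref{Strong:Dualitly:hold} are all sound (the denominator $\aaa^T\pmb{1}\ge N_{\text{QoL}}>0$ does keep $g$ continuous on the compact set, as you note). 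The one place you stop short --- attainment of the sup at a \emph{finite} $\lambda^*$ via exactness of the concave penalty --- is exactly the point the paper itself does not establish either: immediately after the proposition the authors concede that ``theoretically, it is required to have $V_\lambda=0$'' and that $V_\lambda$ only decreases to $0$ as $\lambda\to+\infty$, settling in practice for $V_\lambda<\varepsilon$ at a finite $\lambda$. So your honest flag there is not a gap relative to what the paper proves or uses; if you wanted to close it you would need the quantitative exact-penalty threshold (Lipschitz constant of $g$ on the polytope versus the growth of $V$ away from the binary vertices), which is the content of the cited propositions.
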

\begin{proof}
The proof follows from \cite[Proposition 1]{vu18TWC} and \cite[Proposition 1]{vu18TCOM}; hence, omitted for brevity.
\end{proof}

Theoretically, it is required to have $V_\lambda=0$ in order to obtain an optimal $\lambda^*$. According to Proposition~\ref{Prop:LongP1}, $V_\lambda$ decreases to $0$ as $\lambda\to+\infty$. Since there is always a numerical tolerance in computation, it is sufficient to accept $V_\lambda<\varepsilon$ for some small $\varepsilon$ with a sufficiently large value of $\lambda$ chosen.
In our numerical experiment, for $\varepsilon = 10^{-3}$, we see that $\lambda=1$ is enough to ensure $V_\lambda\leq\varepsilon$. Note that this way of choosing $\lambda$ has been widely used in the literature, e.g., \cite{vu18TWC,vu18TCOM,che14TWC,Rashid14TCOM}.

\begin{figure*}[t!]
\centering
\includegraphics[width=0.6\textwidth]{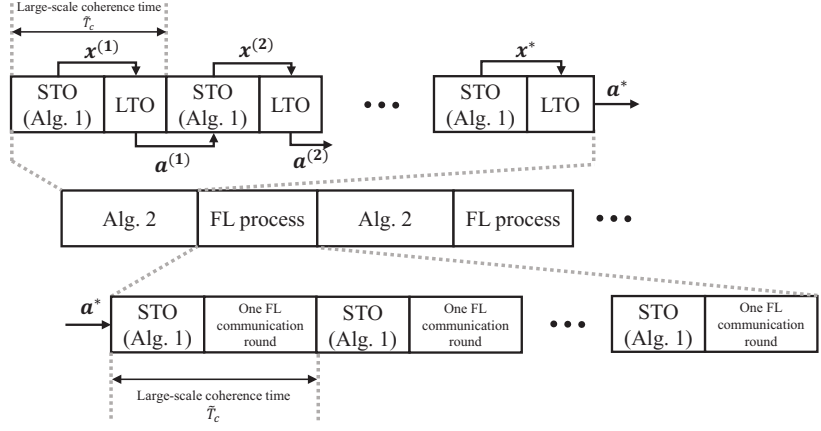}
\vspace{-0mm}
\caption{{Overview of Algorithm~\ref{alg:main}'s operation to assist CFmMIMO-based wireless FL. 
}}
\label{fig:time1}
\end{figure*}

{Problem \eqref{longP:appr:relax} is still challenging due to the expectation operator in $g(\aaa)$ in its cost function $\LL(\aaa)$. 
Following the procedure proposed in \cite{liu18TSP}, we approximate $\LL(\aaa)$ by its surrogate function $\widetilde{\LL}(\aaa)$, which is given as}
{
\begin{align}
\label{LLtilden+1}
\widetilde{\LL}^{(n+1)}(\aaa) \triangleq \,\,
&\widetilde{g}^{(n+1)}(\aaa) + \lambda\widetilde{V}^{(n+1)}(\aaa)
\\
\label{gtilde}
\widetilde{g}^{(n+1)}(\aaa) \triangleq \,\, 
& (1-\phi^{(n+1)})\widetilde{g}^{(n)}(\aaa) 
+ \phi^{(n+1)}\widetilde{T}(\aaa)
\\
\label{Vtilde}
\nonumber
\widetilde{V}^{(n+1)}(\aaa) \triangleq \,\,
& V^{(n+1)} + ((\nabla V)^{(n+1)})^T(\aaa-\aaa^{(n+1)}) 
\\
&+ \tau||\aaa-\aaa^{(n+1)}||^2
\\
\nonumber
(\nabla V)^{(n+1)} = \,\,
& \pmb{1}-2\aaa^{(n+1)},
\end{align}
where $\phi^{(n+1)}$ is a weighting parameter and $\tau$ can be any positive constant. Here, the surrogate function $\widetilde{g}^{(n+1)}(\aaa)$ at iteration $n+1$ depends on the surrogate function $\widetilde{g}^{(n)}(\aaa)$ at iteration $n$ and the approximate functions $\widetilde{T}(\aaa)$ of $T(\aaa)$. $\widetilde{g}^{(n)}(\aaa)$ is approximately updated as
\begin{align}
\label{gtilden}
\widetilde{g}^{(n)}(\aaa) = g^{(n)} + (\nabla g)^{(n)} (\aaa-\aaa^{(n+1)}).
\end{align}
and $\widetilde{T}(\aaa)$ is updated as
\begin{align}
\label{Ttilde}
\nonumber
\widetilde{T}(\aaa) \triangleq\,\,
& T^{(n+1)} + ((\nabla T)^{(n+1)})^T(\aaa-\aaa^{(n+1)}) 
\\
& + \tau||\aaa-\aaa^{(n+1)}||^2.
\end{align}}

{Now, with \eqref{gtilden} and \eqref{Ttilde}, \eqref{gtilde}  becomes: 
\begin{align}
\label{gtilden+1}
\nonumber
\widetilde{g}^{(n+1)}(\aaa)  \triangleq \,\,
& g^{(n+1)} + ((\nabla g)^{(n+1)})^T(\aaa-\aaa^{(n+1)}) 
\\
& + \tau||\aaa-\aaa^{(n+1)}||^2,
\end{align}
where
\begin{align}
g^{(n+1)} =  \,\,
& (1-\phi^{(n+1)})g^{(n)}+ \phi^{(n+1)}T^{(n+1)} 
\\
\nonumber
(\nabla g)^{(n+1)} =  \,\,
&(1-\phi^{(n+1)})(\nabla g)^{(n)}
+ \phi^{(n+1)}(\nabla T)^{(n+1)} ,
\end{align}
with $g^{(0)}=0$, $(\nabla g)^{(0)} = \pmb{0}$. Here, 
\begin{align}
\nonumber
&(\nabla T)^{(n+1)} = 
\\
&q\frac{(\tilde{\ttt}_d + \tilde{\ttt}_c + \tilde{\ttt}_u)((\aaa^{(n+1})^T\pmb{1})) - \pmb{1}((\aaa^{(n+1)})^T(\tilde{\ttt}_d + \tilde{\ttt}_c + \tilde{\ttt}_u))} {((\aaa^{(n+1})^T\pmb{1}))^2}.
\end{align}
From \eqref{Vtilde} and \eqref{gtilden+1}, \eqref{LLtilden+1} can be written as
\begin{align}
\label{LLtilden}
\nonumber
\widetilde{\LL}^{(n+1)}(\aaa)  \triangleq \,\,
& \LL^{(n+1)} + ((\nabla \LL)^{(n+1)})^T(\aaa-\aaa^{(n+1)}) 
\\
& + \tau||\aaa-\aaa^{(n+1)}||^2,
\end{align}
where
\begin{align}
\LL^{(n+1)} = \,\,
& g^{(n+1)} + \lambda V^{(n+1)}
\\
(\nabla \LL)^{(n+1)} = \,\,
& (\nabla g)^{(n+1)} + \lambda (\nabla V)^{(n+1)}.
\end{align}
}


{At the large-scale coherence time or iteration $n+1$, problem \eqref{longP:appr:relax} is approximated by the following convex problem:
\vspace{-0mm}
\begin{align}\label{longP:appr:relax2}
\underset{\aaa\in\HHH}{\min} \,\,
&\widetilde{\LL}^{(n+1)}(\aaa).
\end{align}
}

\subsection{Solving the Overall Problem \eqref{mainP1:epi}}
\label{subsec:alg3}
Algorithm~\ref{alg:main} outlines the main steps to solve the overall problem \eqref{mainP1:epi} (hence \eqref{mainP1}). In the large-scale coherence time $n$ (i.e., iteration $n$), for a given random value of $\aaa^{(n+1)}\in\HHH$, the short-term subproblem \eqref{shortP} is solved by Algorithm~\ref{alg:2} after $I_S^{(n)}$ iterations to obtain a KKT solution.
This solution is then used to construct the approximate long-term master problem \eqref{longP:appr:relax2}. After solving \eqref{longP:appr:relax2} to obtain an optimal solution $(\aaa^*)^{(n+1)}$, we update $\aaa^{(n+2)}$ as
\begin{align}\label{a:update}
a_k^{(n+2)} = (1-\pi^{(n+1)})a_k^{(n+1)}+\pi^{(n+1)}(a_k^*)^{(n+1)}, \forall k,
\end{align}
where $\pi^{(n+1)}$ is a weighting parameter, and $\{\phi^{(n)},\pi^{(n)}\}$ are chosen to satisfy the following conditions \cite[Assumption 5]{liu18TSP}:
\begin{description}
\label{A1}
  \item[(B1):] $\phi^{(n)}\rightarrow 0$, $\frac{1}{\phi^{(n)}}\leq \OO (n^\varsigma)$ for $\varsigma\in(0,1)$, and $\sum_{n}(\phi^{(n)})^2<\infty$;
\label{A2}
  \item[(B2):] $\pi^{(n)}\rightarrow 0$, $\sum_{n}\pi^{(n)}=\infty$, $\sum_{n}(\pi^{(n)})^2<\infty$, and $\lim_{n\rightarrow\infty}\frac{\pi^{(n)}}{\phi^{(n)}}=0$.
\end{description}

Fig.~\ref{fig:time1} provides a top-level illustration of the operation of Algorithm~\ref{alg:main} within the context of CFmMIMO-based wireless FL. At the beginning, we run Algorithm~\ref{alg:main} based on the collected network information. Specifically, we solve the short-term subproblem \eqref{shortP} within a short-term optimization (STO) time block, and the long-term master problem \eqref{longP} within a long-term optimization (LTO) time block. 
Every two problems (21) and (22) are solved within a \textit{large-scale coherence time}.
Eventually, the converged UE selection solution $\aaa^{*}$ provided by Algorithm~\ref{alg:main} is used within the subsequent FL process as described in Sec.~\ref{subsec:CFmMIMO}. In the FL training process, one FL communication round and one STO time block are executed in one large-scale coherence time. Here, the values of transmit power and processing frequency obtained in the STO time block are not those obtained in the STO time block when executing Algorithm~\ref{alg:main}, but rather are computed by the same Algorithm~\ref{alg:2}. The detailed execution of one FL communication round is discussed in Section~\ref{subsec:CFmMIMO} and illustrated in Figure~\ref{fig:time2}. Once the network information changes, we will re-engage Algorithm~\ref{alg:main} to get a new solution, ready for FL to execute again.

\begin{algorithm}[!t]
\caption{Online successive convex approximation approach for solving the overall problem \eqref{mainP1:epi}}
\begin{algorithmic}[1]\label{alg:main}
\STATE \textbf{Initialization}: Set $n=0$, select a random $\aaa^{(n+1)}\in\HHH$
\REPEAT
\STATE Solve the short-term subproblem \eqref{shortP} to obtain its optimal solution $(\ETA^*,\f^*,\ZETA^*)$ by using Algorithm~\ref{alg:2}, and update $(\ETA^{(n+1)},\f^{(n+1)}\ZETA^{(n+1)})=
(\ETA^*,\f^*,\ZETA^*)$
\STATE Solve the approximate long-term master problem \eqref{longP:appr:relax2} of \eqref{longP} using CVX to obtain its optimal solution $(\aaa^*)^{(n+1)}$
\STATE Update $\aaa^{(n+2)}$ by \eqref{a:update} and $n=n+1$
\UNTIL{convergence}
\end{algorithmic}
\textbf{Output}: $\aaa^*=\aaa^{(n+1)}$
\end{algorithm}
\vspace{-0mm}

\begin{figure*}[t!]
	\centering
	\subfigure[Case (C1)]
	{\includegraphics[width=0.4\textwidth]{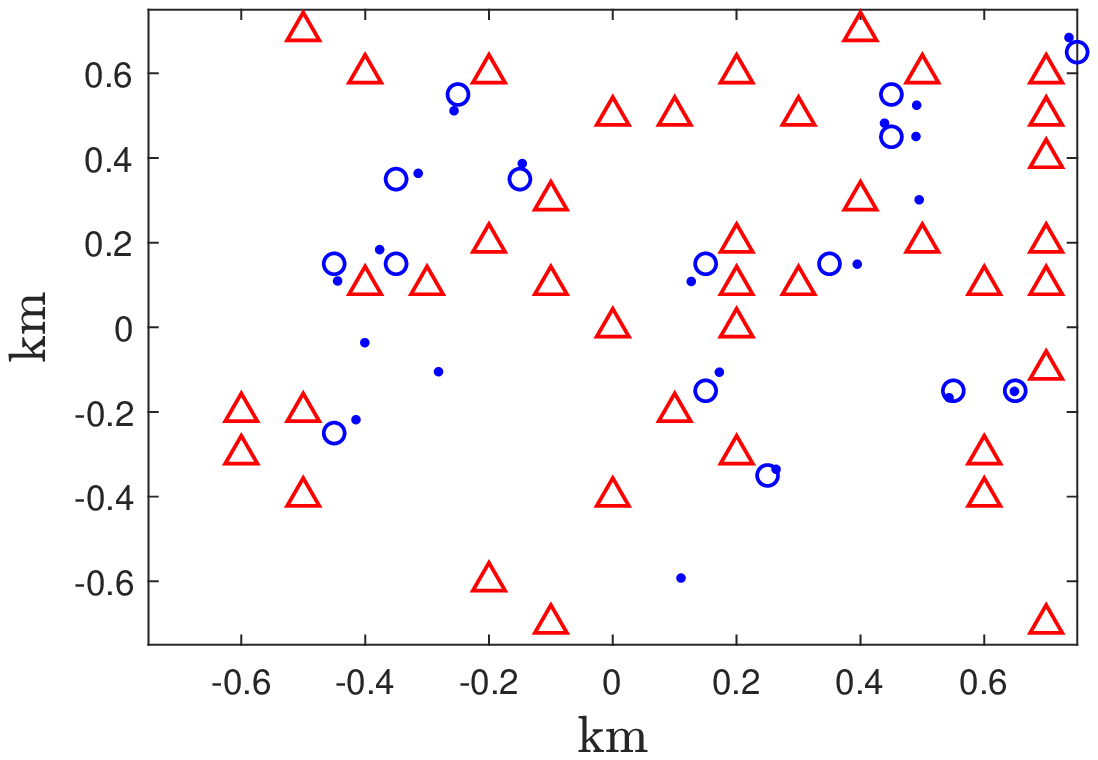}\label{subfig:C1}}
	\subfigure[Case (C2)]
	{\includegraphics[width=0.4\textwidth]{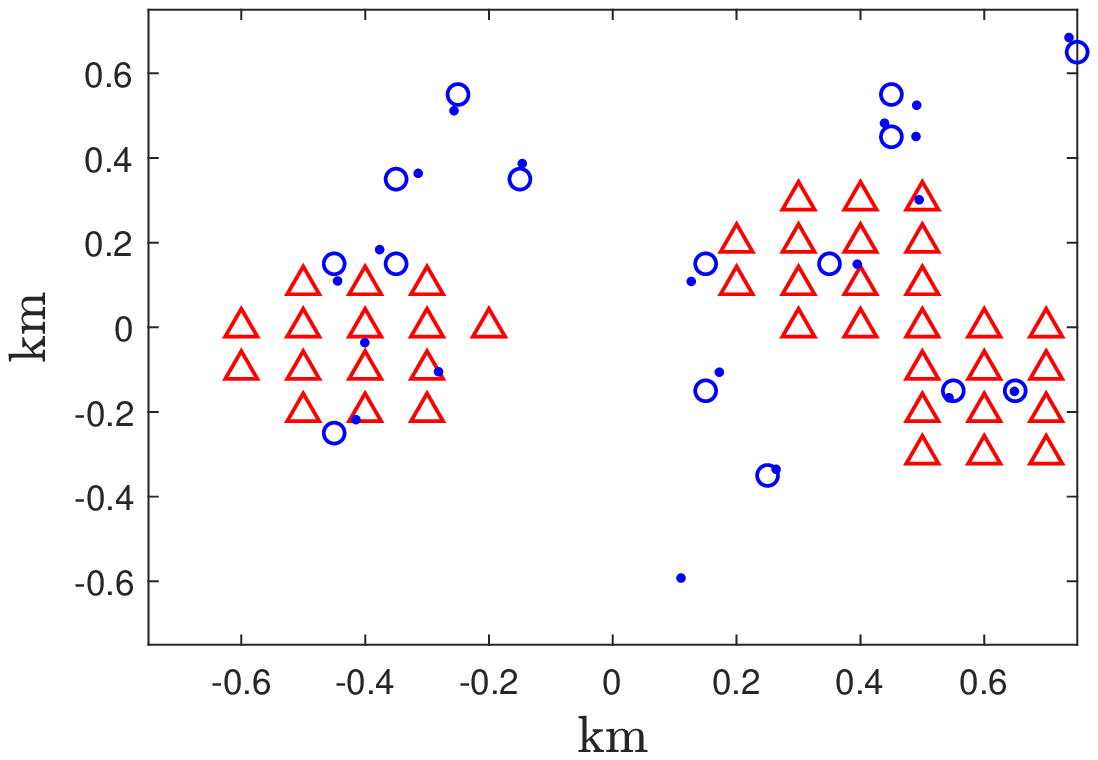}\label{subfig:C2}}
	\caption{{Examples of the network setup in two cases (C1) and (C2) with $40$ APs and $15$ UEs. The AP, UE, and fixed locations are represented as the markers in the shapes of triangles, circles, and dots, respectively.}}
	\label{Fig:system}
	\vspace{-2mm}
\end{figure*}

\subsection{Convergence and Complexity Analyses of Algorithm~\ref{alg:main}}

\begin{definition}
\label{stationarypoint}
A solution $(\aaa^*,\x^*)$ is called a stationary solution of problem \eqref{mainP1:epi} (or \eqref{mainP1}) if $\x^*$ is a KKT solution of the short-term subproblem \eqref{shortP} for $\aaa=\aaa^*$, and $\aaa^*$ is a KKT solution of the long-term master problem \eqref{longP} for $\x = \x^*$.
\end{definition}

\begin{proposition}
\label{Prop:mainP1}
Algorithm~\ref{alg:main} converges to a neighbourhood of the stationary solutions to problem \eqref{mainP1}.
\end{proposition}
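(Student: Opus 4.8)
The plan is to cast the entire convergence analysis inside the two-stage stochastic successive convex approximation framework of \cite{liu18TSP}, so that the proof reduces to checking that every structural and technical hypothesis of that framework is met by our construction. I would organise the verification around the two nested layers of Algorithm~\ref{alg:main}: the inner short-term solver (Algorithm~\ref{alg:2}) and the outer long-term surrogate update \eqref{a:update}. For the inner layer I would simply invoke Proposition~\ref{shortP:Prop}: for every fixed long-term iterate $\aaa^{(n+1)}\in\HHH$, Algorithm~\ref{alg:2} returns a KKT point $(\ETA^*,\f^*,\ZETA^*)$ of the short-term subproblem \eqref{shortP}, which is exactly the quality of short-term solution demanded both by Definition~\ref{stationarypoint} and by the framework. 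This KKT point fixes the vectors $\tilde{\ttt}_d,\tilde{\ttt}_c,\tilde{\ttt}_u$, and hence the sample realisation $T(\aaa)$ and its gradient $(\nabla T)^{(n+1)}$ that feed the surrogate.

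Second, I would verify that the long-term surrogate $\widetilde{\LL}^{(n+1)}(\aaa)$ in \eqref{LLtilden} satisfies the surrogate-function requirements of \cite{liu18TSP}. The decomposition \eqref{gtilde}--\eqref{gtilden+1} exhibits $\widetilde{\LL}^{(n+1)}$ as the sum of a linear term, the recursively averaged gradient surrogate $\widetilde{g}^{(n+1)}$, and the proximal regulariser $\tau\|\aaa-\aaa^{(n+1)}\|^2$; the last term makes each subproblem uniformly strongly convex, while the averaging recursion \eqref{gtilde} with weights $\phi^{(n+1)}$ produces a stochastic estimate of the true expected gradient $\nabla g=\nabla\EEE\{T\}$ that is asymptotically consistent precisely because of condition (B1). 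I would also record that $\HHH$ is a compact convex polytope, being the intersection of the box \eqref{arelax} with the linear cut \eqref{tildeN}, which supplies both the boundedness of the iterates and the Slater qualification the framework needs, and that the penalty term $V(\aaa)$ together with its linearised surrogate $\widetilde{V}^{(n+1)}$ in \eqref{Vtilde} obeys the same gradient-consistency and upper-bound properties.

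Third, with the step-size rules (B1) and (B2) matched verbatim to \cite[Assumption~5]{liu18TSP}, I would apply the main convergence theorem of \cite{liu18TSP} to the penalised problem \eqref{longP:appr:relax}: the sequence $\{\aaa^{(n)}\}$ generated by \eqref{a:update} converges to a KKT point of \eqref{longP:appr:relax} for the fixed multiplier $\lambda$, which, paired with the short-term KKT points, is a stationary solution of \eqref{longP:appr:relax} in the sense of Definition~\ref{stationarypoint}. The final step is to transfer this back to \eqref{mainP1} through Proposition~\ref{Prop:LongP1}: since a finite $\lambda$ only drives the binary gap $V_\lambda$ to within a tolerance $\varepsilon$ rather than exactly to zero, the limit point satisfies the binary constraint \eqref{a} only approximately. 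This residual gap is precisely what turns exact stationarity of \eqref{mainP1} into convergence to a \emph{neighbourhood} of its stationary solutions, matching the statement.

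I expect the main obstacle to lie in the second step, namely rigorously establishing that the recursively averaged surrogate $\widetilde{g}^{(n+1)}$ is an asymptotically consistent, gradient-accurate approximation of the nonconvex expected cost $g(\aaa)=\EEE\{T(\aaa)\}$. The difficulty is that the sample $T(\aaa)$ is itself the value of an inner optimisation and depends on $\aaa^{(n+1)}$ in a nonsmooth way through the selections $k^{*},i^{*},j^{*}$ defined via $\underset{k\in\NN}{\mathrm{argmax}}$; one must therefore control the bias introduced by the changing short-term KKT solutions across iterations and guarantee that the induced gradient estimates have the uniformly bounded second moments on which the stochastic-approximation argument of \cite{liu18TSP} rests. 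Compactness of $\HHH$ and the boundedness of the large-scale fading statistics should provide the moment bounds, but the interaction between the nonsmooth $\mathrm{argmax}$ structure and the surrogate averaging is where the careful work is required.
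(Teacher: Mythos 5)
Your proposal follows essentially the same route as the paper: both reduce the analysis to the two-stage stochastic SCA framework of \cite{liu18TSP} by (i) invoking Proposition~\ref{shortP:Prop} for the short-term layer, (ii) verifying the framework's assumptions on the surrogate $\widetilde{\LL}^{(n+1)}$ and the step-size conditions (B1)--(B2), (iii) applying its convergence lemma to obtain KKT points of \eqref{longP:appr:relax}, and (iv) attributing the ``neighbourhood'' qualifier to the use of finite parameters (the paper cites finite $\lambda$, $I_S^{(n)}$ and $I_L$; you emphasize finite $\lambda$ via Proposition~\ref{Prop:LongP1}). The consistency of the averaged surrogate that you flag as the main obstacle is handled in the paper exactly as you anticipate, by checking the value/gradient-matching, strong concavity and Lipschitz properties of $\widetilde{R}_{d,k}$ and $\widetilde{R}_{u,k}$ and then invoking \cite[Corollary 1]{liu18TSP}.
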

\begin{proof}
  See Appendix~\ref{App:C}.
\end{proof}
\noindent
Theoretically, $I_S^{(n)}\rightarrow\infty$ and $\lambda\rightarrow\infty$ are required for Algorithm~\ref{alg:main} to converge to the stationary solutions of problem \eqref{mainP1}. When $I_S^{(n)}$ and $\lambda$ are finite, Algorithm~\ref{alg:main} converges to approximate stationary solutions of problem \eqref{mainP1}.

The computational complexities of solving \eqref{mainP:appr} at each iteration of Algorithm~\ref{alg:2} and solving \eqref{longP:appr:relax2} at each iteration of Algorithm~\ref{alg:main} are polynomial in the number of variables and constraints. In particular, \eqref{mainP:appr} can be transformed into an equivalent optimization problem that involves $N_{v,1}\triangleq(2MN+4N+3)$ real-valued scalar decision variables, $N_{l,1}\triangleq(2MN+M+4N+3)$ linear constraints and $N_{q,1}\triangleq(MN+5N)$ quadratic constraints. Therefore, \eqref{mainP:appr} requires a complexity of $\OO(\sqrt{N_{l,1}+N_{q,1}}[N_{v,1}+N_{l,1}+N_{q,1}]N_{v,1}^2)$ \cite{tam17TWC,Nemirovski96}. Problem \eqref{longP:appr:relax2} involves $N_{v,2}\triangleq N$ real-valued scalar decision variables, $N_{l,2}\triangleq(N+1)$ linear constraints. As such, \eqref{longP:appr:relax2} requires a complexity of $\OO(\sqrt{N_{l,2}}[N_{v,2}+N_{l,2}]N_{v,2}^2)$.
\vspace{-0mm}


\begin{figure*}[t!]
  \centering
  \vspace{-0mm}
  \subfigure[]
  {\includegraphics[width=0.4\textwidth]{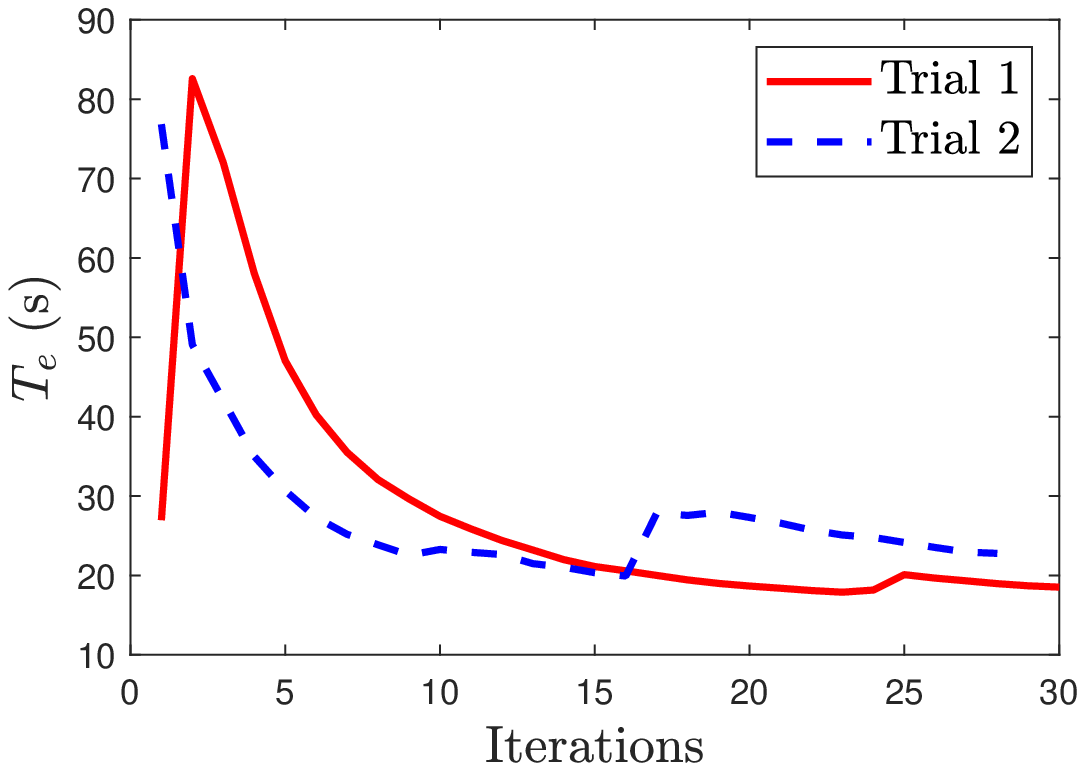}\label{subfig:1a}}
  \subfigure[]
  {\includegraphics[width=0.4\textwidth]{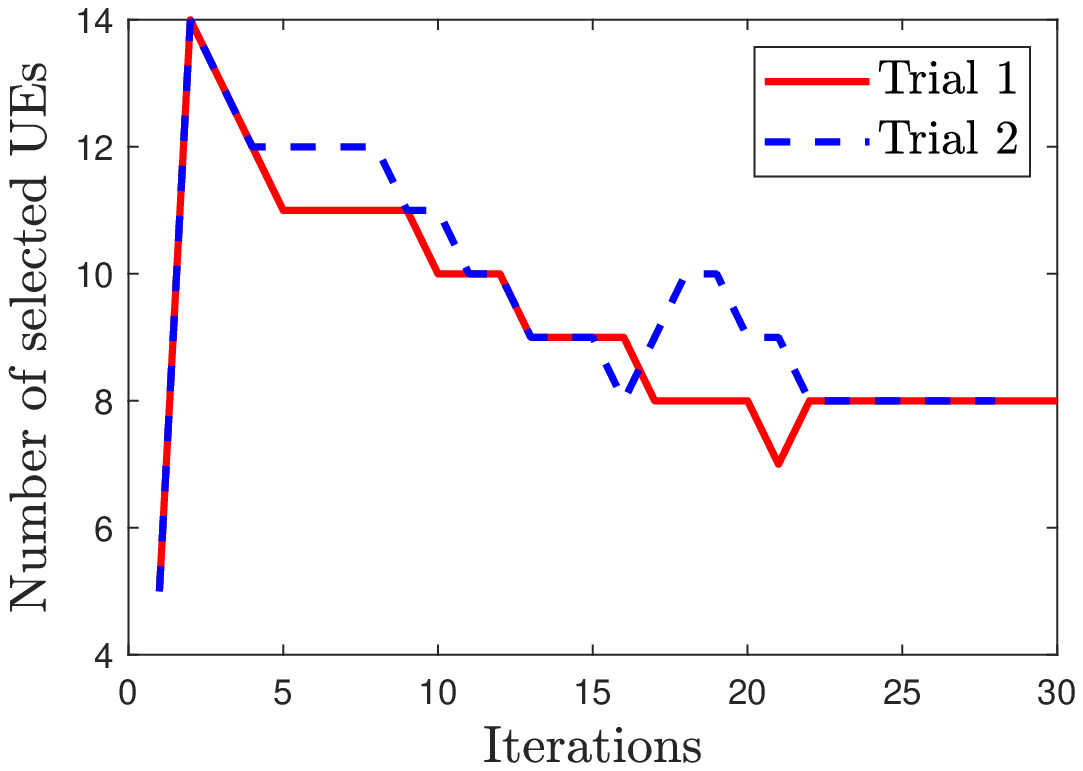}\label{subfig:1b}}
  \vspace{-2mm}
  \caption{{ The convergence process of Algorithm~\ref{alg:main}.
  In this example, we consider Case (C2) and set $M=40, N=15$, $N_{\text{QoL}}=5$, $D=1.5$ km.}}
  \label{Fig:1}
\end{figure*}

\vspace{-3mm}
\section{Numerical Examples}
\vspace{-0mm}
\label{sec:sim}
{
This section provides numerical results to analyze the effectiveness of the proposed Algorithm~\ref{alg:main} in minimizing the execution time of the considered standard FL process \cite{mcmahan17AISTATS,sai20ICML}. 
As previously discussed in Sec.~\ref{sec:Introd}, the ultimate question is how to find an optimal set of UEs to reduce the FL execution time without satisfying the test accuracy too much. To answer this question, we need to answer the fundamental questions (Q1) and (Q2). Assuming that $N_{\text{QoL}}$ is known in advance, our work focuses on answering question (Q2). It is shown in this section that the number of selected UEs in our example is always larger than  $N_{\text{QoL}}$. Therefore, the test accuracy obtained with this set of selected UEs is always acceptable. This means with the UE selection of our approach, the test accuracy of an FL process using real datasets is expected to be the same as that in \cite{mcmahan17AISTATS,sai20ICML}, and hence, not shown in this paper. On the other hand, if we try to make an ultimate analysis on the test accuracy of our UE selection scheme, we first have to set $N_{\text{QoL}}$ correctly, which leads us back to answering question (Q1). However, answering question (Q1) requires extensive efforts that are out of the scope of this work. Based on this observation, we leave the analysis of the test accuracy of an FL process with our UE selection scheme for future work.

}

\subsection{Network Setup with Non-Uniform UE/AP Distribution}
\label{subsec:netset}
\vspace{-0mm}
We consider a CFmMIMO network in a square of $D\times D$ km$^2$ whose edges are wrapped around to avoid the boundary effects. We examine the following two cases.
\begin{itemize}
    \item Case (C1): The UEs are more likely to stay near some fixed locations (e.g., coffee shops, restaurants) in the considered area; and the APs are uniformly distributed across the considered area.
    \item Case (C2): Both the APs and UEs are more likely to stay close to some fixed locations in the considered area.
\end{itemize}

\begin{figure*}[t!]
  \centering
  \vspace{-2mm}
  \subfigure[Case (C1) with $D=1.5$ km]
  {\includegraphics[width=0.4\textwidth]{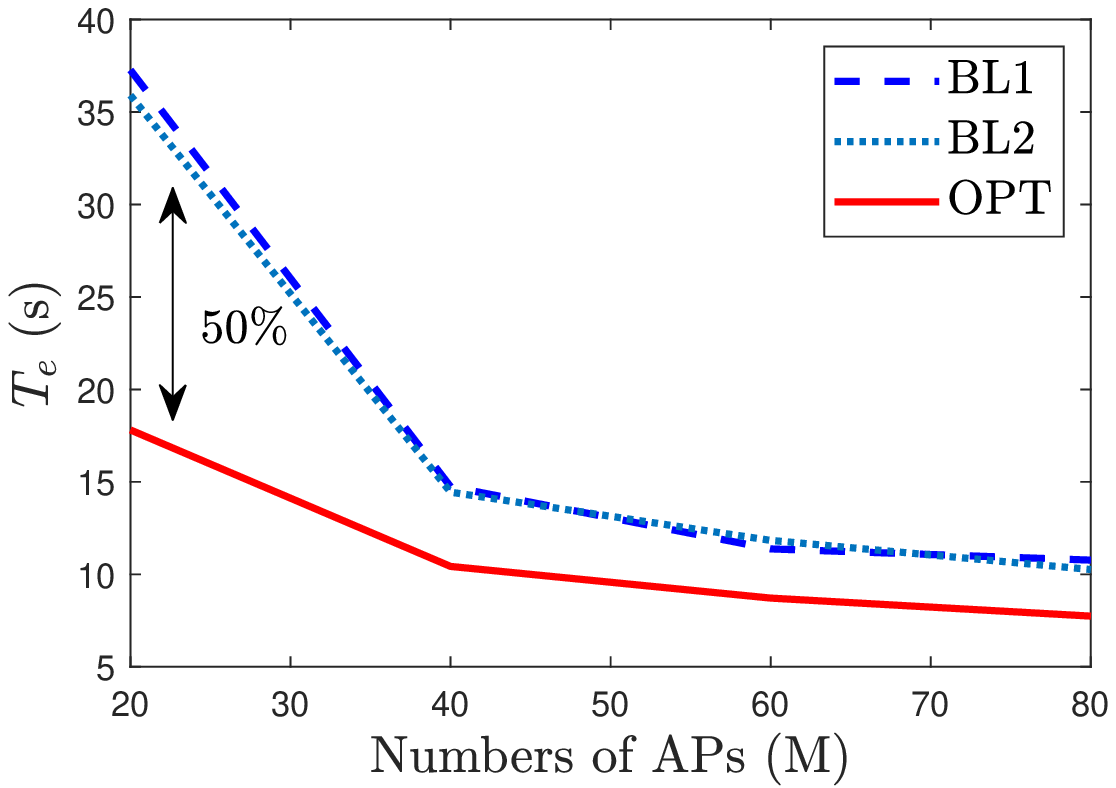}\label{subfig:2b}}
  \vspace{-2mm}
  \subfigure[Case (C2) with $D=1.5$ km]
  {\includegraphics[width=0.4\textwidth]{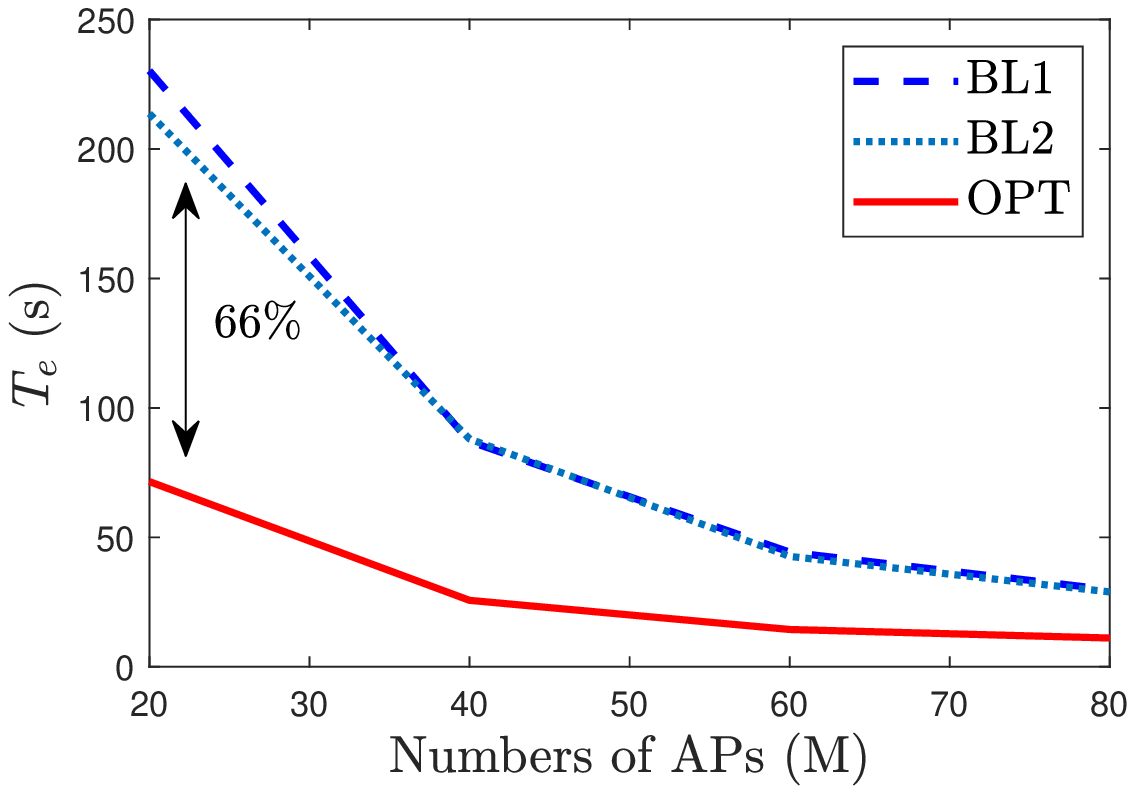}\label{subfig:2b}}
  \caption{{The effectiveness of the proposed approach. In these examples,  we set $N=15$ and $N_{\text{QoL}}=5$.}}
  \label{Fig:2}
\end{figure*}

\subsubsection{Modeling of Case (C1)}
{
First, a set of $x_p$ fixed locations are uniformly distributed over the network square. To generate UE locations, we create a grid that has $x_l$ vertical and $x_l$ horizontal lines. Then, $N$ UE locations are chosen as the points on this grid that are closest to $x_p$ fixed locations. To generate AP locations, we create a new grid that is the same as the UE location grid, but the lines of these two grids are interleaved. Finally, $M$ AP locations are uniformly chosen in the points on the latter grid. Here, we consider $x_l=15, x_p = 20$.}

\subsubsection{Modeling of Case (C2)}
{
Case (C2) is modeled in the same way for Case (C1), except that the $M$ AP locations are chosen from the points on the AP location grid that are closest to $x_{\text{AP}}$ fixed locations. Here, we choose $x_{\text{AP}}=3$, and the $x_{\text{AP}}$ fixed locations are uniformly selected out of $x_p$ fixed locations without replacement.
An example of the network setup is shown in Fig.~\ref{Fig:system}.}

\subsubsection{Setup for each network realization}
{
Since each FL communication round happens in one large-scale coherence time (in the order of seconds), the total execution time of an FL process is expected to be around several minutes. Therefore, we assume that the UEs only move around their current locations during the FL process. Here, in each communication round of an FL process, we let each UE move within a circle of radius $5$ m around its current location, while the AP locations remain unchanged.} 

\vspace{-3mm}
\subsection{Parameter Settings}
We model large-scale fading coefficients $\beta_{mk}$ as \cite{emil20TWC}: 
\begin{align}\label{fading:large}
\beta_{mk} = 10^{\frac{\text{PL}_{mk}^d}{10}}10^{\frac{F_{mk}}{10}},
\end{align}
where $10^{\frac{\text{PL}_{mk}^d}{10}}$ represents the path loss, and $10^{\frac{F_{mk}}{10}}$ represents the shadowing effect with $F_{mk}\in\NN(0,4^2)$ (in dB).  Here, $\text{PL}_{mk}^d$ (in dB) is given by  \cite{emil20TWC}
\begin{align}\label{PL:model}
\text{PL}_{mk}^d = -30.5-36.7\log_{10}\left(\frac{d_{mk}}{1\,\text{m}}\right),
\end{align}
and the correlation among the shadowing terms from the AP $m, \forall m\in\MM$ to different UEs $k,\ell\in\NN$ is expressed as:
\begin{align}\label{corr:shadowing}
\EEE\{F_{mk}F_{j\ell}\} \triangleq
\begin{cases}
  4^22^{-\delta_{k\ell}/9\,\text{m}},& \text{if $j=m$}\\
  0, & \mbox{otherwise},
\end{cases}, \forall j\in\MM,
\end{align}
where $\delta_{k\ell}$ is the physical distance between UEs $k$ and $\ell$.

For channel estimation, we use a random pilot assignment scheme. Specifically, the pilot of each user is randomly chosen from a predefined set of $\tau_t=N$ orthogonal pilot sequences, each having a length of $\tau_t$ samples. We set $\tau_c\!=\!200$ samples, $S_d\!=\!S_u\!=\!5$ MB, noise power $\sigma_0^2\!=\!-92$ dBm, $L=5$, $f_{\max}=3 \times 10^9$ cycles/s, $D_k = D = 5\times 10^6$ samples, $c_{k} = 20$ cycles/samples \cite{tran19INFOCOM}, for all $k$, $\alpha=2\times 10^{-29}$. We choose 
$q=90$. Let $\tilde{\rho}_d=1$ W, $\tilde{\rho}_u=0.2$ W, $\tilde{\rho}_t=0.2$ W be the maximum transmit powers of the APs, UEs, and uplink pilot sequences, respectively.
Here, $\rho_d$, $\rho_u$ and $\rho_t$ are the normalized values of $\tilde{\rho}_d$, $\tilde{\rho}_u$ and $\tilde{\rho}_t$ with respect to the noise power. We set $\pi^{(n)} = \frac{1000}{1000+n}$ and $\phi^{(n)} = \frac{1}{n^{9/10}}$ to satisfy conditions (B1) and (B2) in Section~\ref{subsec:alg3}. Finally, we choose $\lambda=1$.

\subsection{Results and Discussions}
\label{discuss}
\vspace{-0mm}
\subsubsection{Effectiveness of Algorithm~\ref{alg:main}}
\label{discuss:effectiveness}
First, we evaluate the convergence behavior of the proposed Algorithm~\ref{alg:main}. As seen from Fig.~\ref{Fig:1}, Algorithm~\ref{alg:main} converges within $30$ iterations for an arbitrary network realization. Note that each iteration of Algorithm~\ref{alg:main} involves solving simple convex programs \eqref{mainP:appr} and \eqref{longP:appr:relax2}. It is therefore expected that  Algorithm~\ref{alg:main} has a low computational complexity.

\begin{figure*}[t!]
	\centering
	\subfigure[]
	{\includegraphics[width=0.4\textwidth]{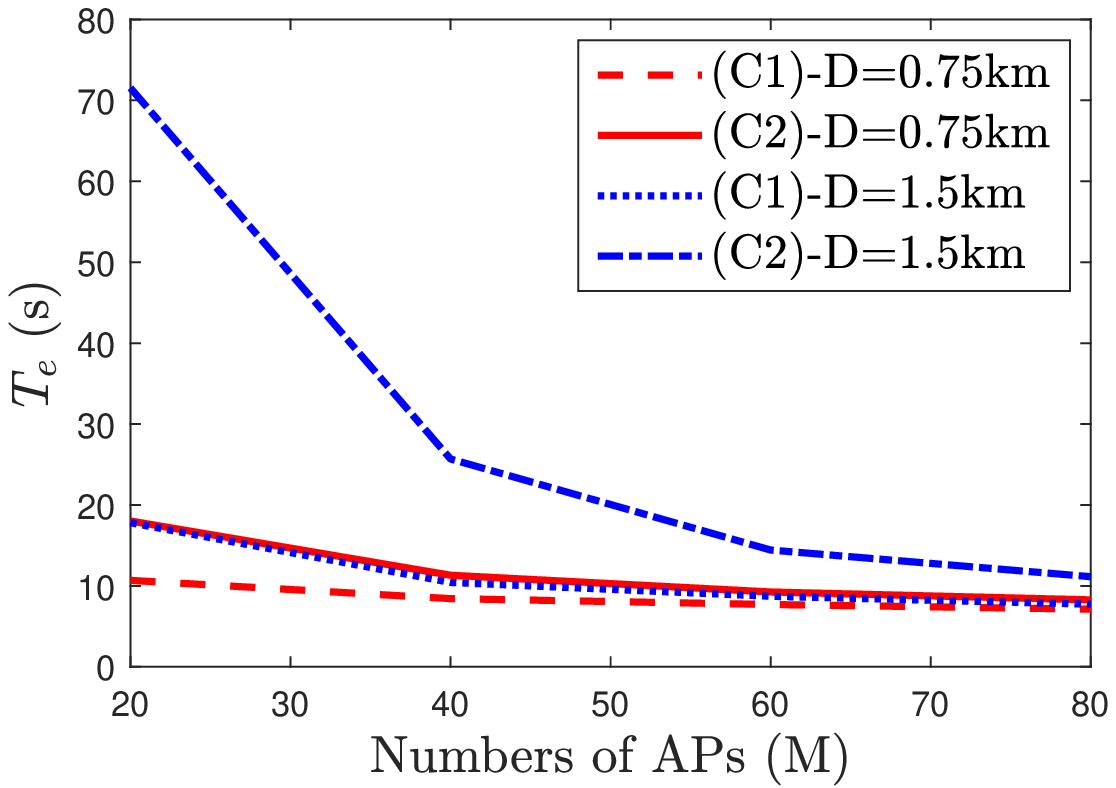}\label{subfig:4a}}
	\subfigure[]
	{\includegraphics[width=0.4\textwidth]{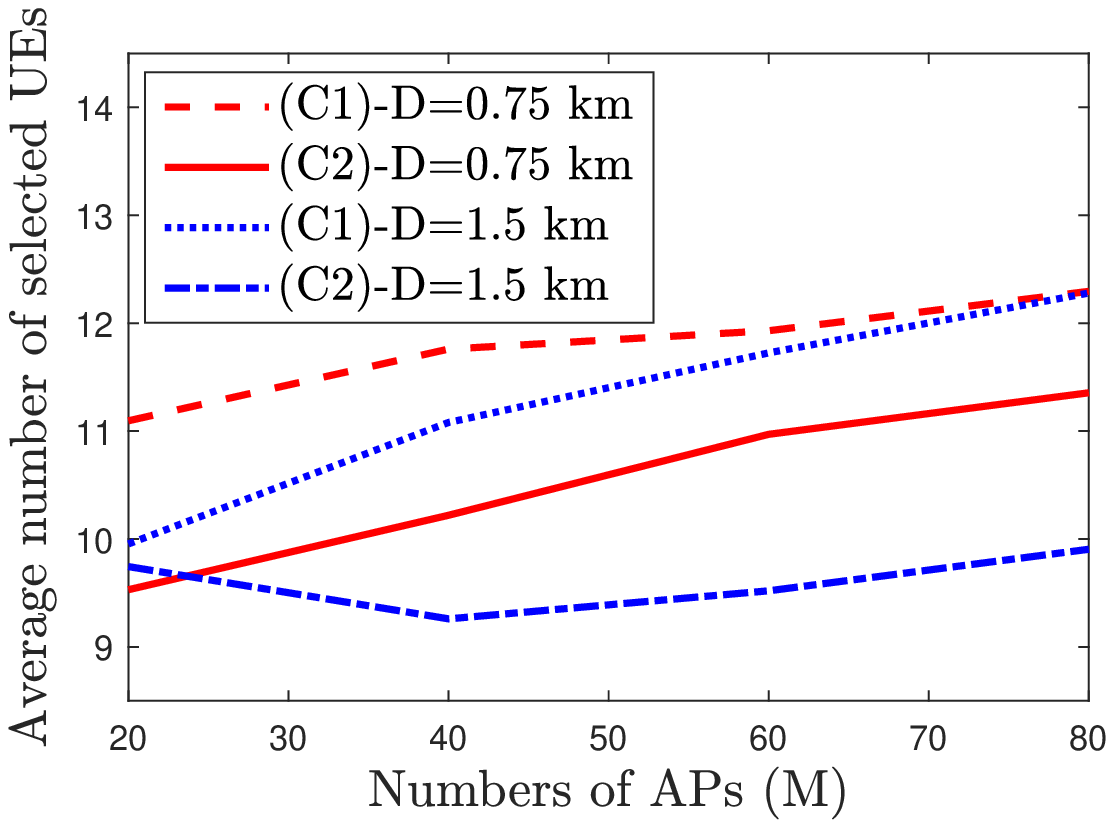}\label{subfig:4b}}
	\vspace{-0mm}
	\caption{{Impact of the number of APs on: (a) FL execution time; (b) number of selected UEs (b). In this example, we set $N=15$ and $N_{\text{QoL}}=5$.}}
	\label{Fig:4}
\end{figure*}

\begin{figure*}[t!]
	\centering
	\subfigure[]
	{\includegraphics[width=0.4\textwidth]{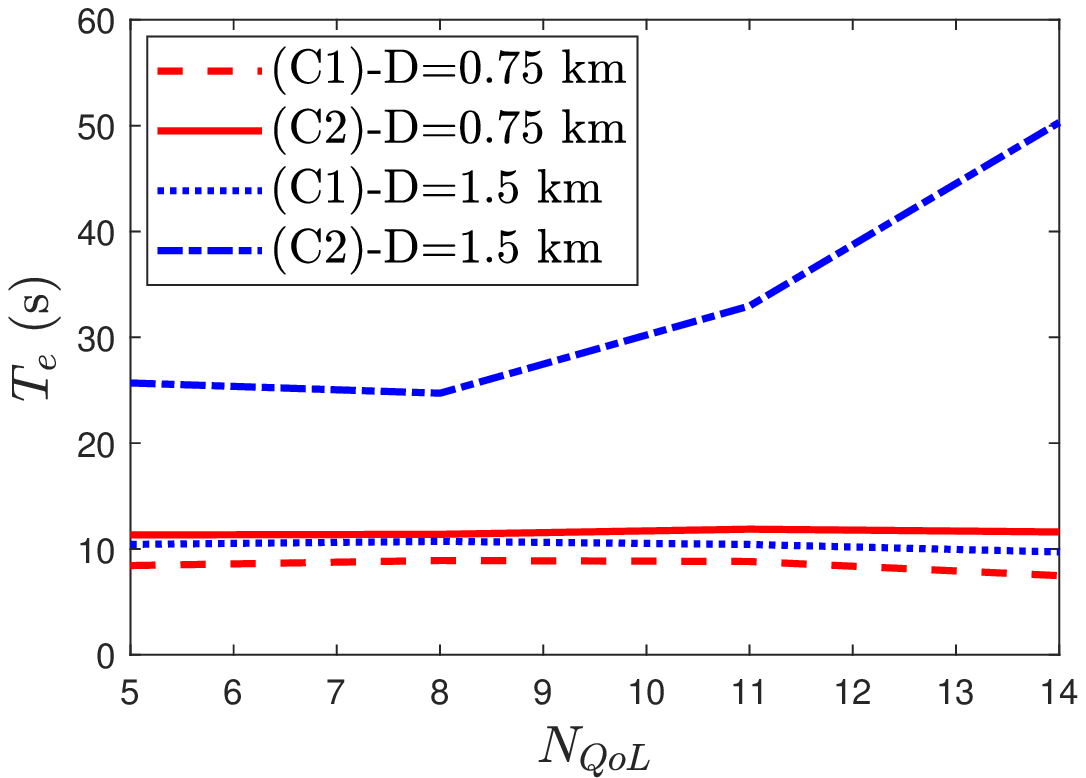}\label{subfig:5a}}
	\subfigure[]
	{\includegraphics[width=0.4\textwidth]{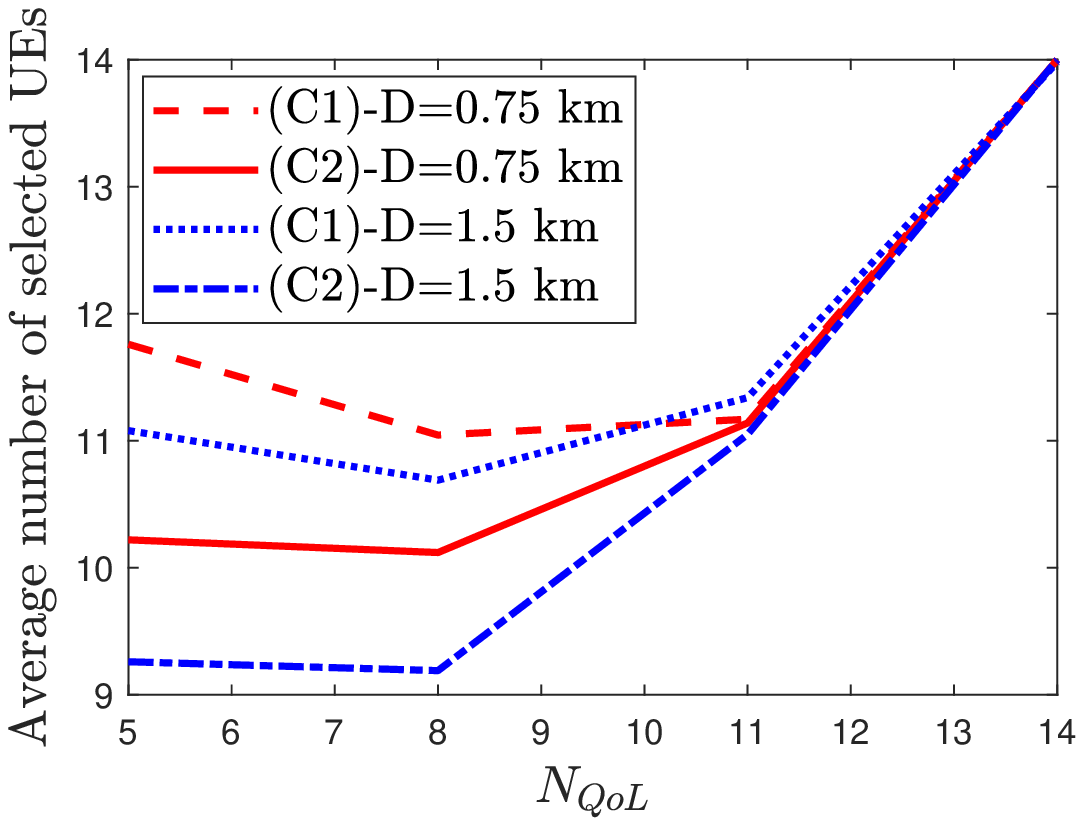}\label{subfig:5b}}
	\vspace{-0mm}
	\caption{{Impact of the threshold $N_{\text{QoL}}$ on: (a) FL execution time; (b) number of UEs being selected. In this example, we set $M=40$, $N=15$.}}
	\label{Fig:5}
\end{figure*}

{Next, we compare Algorithm~\ref{alg:main} (denoted by \textbf{OPT} in the figures) with the following baseline schemes:}
\vspace{-0mm}
\begin{itemize}
\item {
Baseline 1 (\textbf{BL1}): Let $\widehat{N}$ be an integer uniformly drawn from the interval $[N_{\text{QoL}},N]$. Then, before an FL process is executed, we select participating UEs for this FL process by uniformly selecting $\widehat{N}\leq N$ UEs without replacement.
The UE selection result of this scheme is presented by a vector $\aaa_{\text{BL1}}$. The FL execution time by \textbf{BL1} is thus $T_e(\aaa_{\text{BL1}},\ETA,\f,\ZETA)= G(\aaa_{\text{BL1}}) \EEE\{T_o(\aaa_{\text{BL1}},\ETA,\f,\ZETA)\}$. 
Since $\aaa_{\text{BL1}}$ is known, solving the problem of minimizing $T_e(\aaa_{\text{BL1}},\ETA,\f,\ZETA)$ only requires solving the short-term subproblem \eqref{shortP} of optimizing $(\ETA,\f,\ZETA)$ to minimize the execution time of one FL communication round. Here, \eqref{shortP} is solved by using Algorithm~\ref{alg:2} for a given $\aaa_{\text{BL1}}$.
}
\item Baseline 2 (\textbf{BL2}):
{
In this baseline, we let all $N$ UEs participate in an FL \emph{process} but only $K\leq N$ UEs participate in each FL \emph{communication round}. Here, the integer $K$ is uniformly drawn from the interval $[N_{\text{QoL}},N]$.
Then, $K$ UEs are uniformly chosen out of $N$ original UEs without replacement.
The UE selection result of this scheme in each FL communication round is represented by a vector $\aaa_{\text{BL2}}$.
Note that \textbf{BL2} is identical to the opportunistic UE sampling scheme proposed in \cite{mcmahan17AISTATS,sai20ICML}.
Since \textbf{BL2} performs UE selection in each FL communication round, the number of FL communication rounds is $\widetilde{G} = \frac{q}{K} + \tilde{q}\big(1-\frac{K}{N}\big)$\cite[Theorem 1]{sai20ICML}, where $q, \tilde{q}$ depend on the specific characteristics of the FL learning problems and are assumed known in advance. Here, since there is no UE sampling in each FL communication round in our proposed approach, all $\widetilde{N}=\sum_{k\in\NN}a_k$ selected UEs participate in an FL process, i.e., $K=\widetilde{N}$. Therefore, the number of FL communication rounds are $\frac{q}{\widetilde{N}}$ as in \eqref{G}. 
In this work, we choose $\tilde{q} = q$ for simplicity.
Then, the execution time of an FL process using  \textbf{BL2} is measured by $T_e(\aaa_{\text{BL2}},\ETA,\f,\ZETA) = \widetilde{G}\EEE\{T_o(\aaa_{\text{BL2}},\ETA,\f,\ZETA)\}$. 
Since $\aaa_{\text{BL2}}$ is known in each FL communication round, solving the problem of minimizing $T_e(\aaa_{\text{BL2}},\ETA,\f,\ZETA)$ only requires solving the short-term subproblem \eqref{shortP} of optimizing $(\ETA,\f,\ZETA)$ to minimize the execution time of one FL communication round. Here, \eqref{shortP} is solved by using Algorithm~\ref{alg:2} for a given $\aaa_{\text{BL2}}$.
}
\end{itemize}

Fig.~\ref{Fig:2} compares the total execution time of an FL process by all the considered schemes. As seen, our \textbf{OPT} scheme is the best performer. 
In particular, while \textbf{BL1} and \textbf{BL2} perform quite similarly, \textbf{OPT} cuts the execution time by a substantial amount, e.g., by up to $50\%$ in Case (C1) and $66\%$ in Case (C2) with $M=20$ and $D=1.5$ km. 
These results show the significant advantage of an optimal UE selection over heuristic UE selections.

{
Fig.~\ref{Fig:2}  also shows the importance of optimal UE selection for reducing the FL execution time when the AP density defined as the number of APs over a geographical area is moderately low. 
Specifically, in both Cases (C1) and (C2) with a large value of $D$ or a low value of $M$, the reduction in the FL execution time by \textbf{OPT} is at least $44\%$.
This is reasonable because in these cases, there is a high probability of having UEs with unfavorable links. 
This leads to a significantly low execution time of one FL communication time, and hence, the whole FL execution time. 
}

\subsubsection{Impact of the number of APs on the number of selected UEs}
{
Fig.~\ref{Fig:4} shows that 
a larger number of APs corresponds to a larger number of UEs being selected. Here, as the AP-UE distances are smaller, there are potentially more UEs with favorable links hence being selected. 
The only time the number of UEs being selected decreases is in Case (C2) with $M=\{20,40\}$ and $D=1.5$ km. In this case, there are a high probability of having UEs with unfavorable links, which leads to the two largest execution times as shown in Fig.~\ref{Fig:4}(a).
Therefore, the numbers of selected UEs still need to be reduced to shorten the execution time of one FL communication, and hence, the whole FL execution time.}

\subsubsection{Impact of $N_{\text{QoL}}$ on the execution time of an FL process}
{
Fig.~\ref{Fig:5}(a) shows that 
increasing $N_{\text{QoL}}$ leads to a dramatic increase in the FL execution time in a network that has a significantly low density of APs and non-uniformly distributed AP locations. This network is presented in Case (C2) with $D=1.5$ km, where there is a high probability of having UEs with unfavorable links. In this case, for a larger value of $N_{\text{QoL}}$, more UEs are required to participate in an FL process as shown in Fig.~\ref{Fig:5}(b). While this reduces the number of FL communication rounds, a stronger inter-user interference is also resulted. In our example, such a reduction in the number of communication round is not sufficient to compensate for the increase in the execution time of each FL communication round.} 

{
Fig.~\ref{Fig:5}(a) also shows that CFmMIMO networks with a high density of APs potentially provides low-latency FL services for everyone. When there are fewer UEs with unfavorable links as in Case (C1) with $D=\{0.75, 1.5\}$ km and Case (C2) with $D=0.75$ km, the FL execution times are nearly the same when $N_{\text{QoL}}$ is increased.} 


\section{Conclusion}
\label{sec:con}
{
In this work, we have proposed a novel approach that jointly designs UE selection, transmit power, and processing frequency to minimize the execution time of an FL process in CFmMIMO networks. We formulate a mixed-integer mixed-timescale stochastic nonconvex problem under practical requirements on the maximum transmit powers and the minimum number of selected UEs to guarantee quality of learning. Utilizing online successive convex approximation, we have successfully developed a novel algorithm to solve the formulated problem. The proposed algorithm has been proved to converge to the neighbourhood of stationary points. Numerical results have showed that our approach significantly reduces the FL execution time over the baseline schemes.}


%
\vspace{-0mm}

\appendices

\section{Proof of Proposition~\ref{Prop:mainP1}}
\label{App:C}
The proof involves two steps. The first step is to prove that the solution $\x^*$ obtained from Algorithm~\ref{alg:main} is a KKT solution of the short-term subproblem \eqref{shortP}. This proof has already been provided in Proposition~\ref{shortP:Prop}. The second step is to prove that the solution $\aaa^{(n+1)}$ obtained from Algorithm~\ref{alg:main} is a KKT solution of the long-term master problem \eqref{longP}. The details of this proof are as follows. 

It can be confirmed that problem \eqref{mainP1:epi} satisfies the conditions of Assumption 1 on the main problem in the general framework \cite{liu18TSP}. It is worth noting that we do not verify Assumption 1-5) and Assumption 1-6) of \cite{liu18TSP} for the following reasons. Assumption 1-5) on Mangasarian-Fromovitz constraint qualification is used to ensure the existence of KKT solutions of the short-term subproblem \eqref{shortP}. In this work, since Proposition~\ref{shortP:Prop} shows that a KKT solution of \eqref{shortP} can be obtained by Algorithm~\ref{alg:2}, this assumption is unnecessary.
Assumption 1-6) is used to guarantee convergence to an exact stationary point of the short-term subproblem \eqref{shortP}. However, \cite{liu18TSP} confirms that Assumption 1-6) can be removed when we allow an approximate convergence by solving the short-term subproblem \eqref{shortP} with a finite number of iterations.

From the definitions of $\widetilde{R}_{d,k}(\vv)$, and $\widetilde{R}_{u,k}(\uu)$ in  \eqref{hd:apprx} and \eqref{hu:apprx}, it can be verified that $\widetilde{R}_{d,k}(\vv)$ and $\widetilde{R}_{u,k}(\uu)$ have the following properties:
\begin{itemize}
  \item  $\widetilde{R}_{d,k}(\vv^{(n)})=R_{d,k}(\vv^{(n)})$, $\widetilde{R}_{u,k}(\uu^{(n)})={h}_{u,k}(\uu^{(n)})$,
        $\nabla\widetilde{R}_{d,k}(\vv^{(n)})=\nabla R_{d,k}(\vv^{(n)})$, $\nabla\widetilde{R}_{u,k}(\uu^{(n)})=\nabla{R}_{u,k}(\uu^{(n)})$;
  \item $-\widetilde{R}_{d,k}(\vv)$, and  $-\widetilde{R}_{u,k}(\uu)$ are strongly convex;
  \item   $\widetilde{R}_{d,k}(\vv,\vv^{(n)})$ and $\widetilde{R}_{u,k}(\uu,\uu^{(n)})$ are Lipschitz continuous in both $\vv,\vv^{(n)}$ and both $\uu,\uu^{(n)}$, respectively.
\end{itemize}
Algorithm~\ref{alg:2} thus satisfies all the conditions of Assumption 2 on the short-term algorithm within the general framework \cite{liu18TSP}.
Since $\{\phi^{(n)},\pi^{(n)}\}$ are chosen to satisfy conditions (B1) and (B2) in Sec.~\ref{subsec:alg3}, they satisfy all the conditions of Assumption 5 in \cite{liu18TSP}. When Assumptions 1, 2 and 5 in \cite{liu18TSP} is satisfied, it is confirmed by \cite[Corollary 1]{liu18TSP} that the surrogate function $\widetilde{\LL}(\aaa)$ in \eqref{longP:appr:relax2} satisfies the Assumptions 3 and 4 in \cite{liu18TSP} on the properties and asymptotic consistency of surrogate functions.

Since Assumptions 1-5 in \cite{liu18TSP} are all satisfied, it follows from \cite[Lemma 1]{liu18TSP} that:
\renewcommand{\labelenumi}{(\roman{enumi})}
\begin{enumerate}
  \item The sequence $\{(\aaa^{(n+1)},(\aaa^*)^{(n+1)})\}_{n=1}^{\infty}$ generated over iterations of Algorithm~\ref{alg:main} has the following property.
\begin{align}\label{aclose}
  \lim_{n\rightarrow\infty}||\aaa^{(n+1)}-(\aaa^*)^{(n+1)}||=0.
\end{align}
  \item Let $\aaa_{\star}$ be a limit point of a subsequence $\{\aaa^{(n+1)_j}\}_{j=1}^{\infty}$ and
\begin{align}
\label{lim1}
\lim_{j\rightarrow\infty} |\widetilde{\LL}(\aaa^{(n+1)_j})-\widetilde{\LL}(\aaa_{\star})|=0,
\\
\label{lim2}
\lim_{j\rightarrow\infty} |\nabla\widetilde{\LL}(\aaa^{(n+1)_j})-\nabla \LL(\aaa_{\star})|=0.
\end{align}
\end{enumerate}
Without loss of generality, we assume that $\aaa^{(n+1)}\to\aaa_\star$ as $n\to\infty$. Then, \eqref{lim1} and \eqref{lim2} imply that
\begin{align}
\label{lim3}
\lim_{n\rightarrow\infty} |\widetilde{\LL}(\aaa^{(n+1)})-\LL(\aaa^{(n+1)})|=0,
\\
\label{nablagclose}
\lim_{n\rightarrow\infty} |\nabla\widetilde{\LL}(\aaa^{(n+1)})-\nabla \LL(\aaa^{(n+1)})|=0.
\end{align}

It can be seen that there always exists one interior point in $\HHH$. Therefore, the convex problem \eqref{longP:appr:relax2} satisfies the Slater's constraint qualification condition. Its optimal solution $(\aaa^*)^{(n+1)}$ is thus a KKT solution to \eqref{longP:appr:relax}, and hence, \eqref{longP} when $\lambda\rightarrow \infty$ (see Proposition~\ref{Prop:LongP1}), i.e.,
\begin{subequations}\label{KKT}
\begin{align}
&\nabla {\LL}((\aaa^*)^{(n+1)})
+ \sum_{j=1}^r\nu_j\nabla \delta_j((\aaa^*)^{(n+1)})
= 0,
\\
&\nu_j\delta_j((\aaa^*)^{(n+1)})=0, \forall j\in\{1,...,q\},
\end{align}
\end{subequations}
where $\delta_j(\aaa),\forall j\in\{1,\dots,q\}$ represent the functions in the constraints \eqref{tildeN}, \eqref{suma}
and \eqref{arelax}. It follows from \eqref{aclose} and \eqref{nablagclose} that the gap between $\aaa^{(n+1)}$ and $(\aaa^*)^{(n+1)}$ and that between $\nabla\widetilde{\LL}(\aaa^{(n+1)})$ and $\nabla\LL(\aaa^{(n+1)})$ converge to zero as $n\to\infty$. Therefore, \eqref{KKT} implies
\begin{subequations}
\begin{align}
&\nabla \LL(\aaa^{(n+1)})
+ \sum_{j=1}^r\nu_j\nabla \delta_j(\aaa^{(n+1)})
= 0,
\\
&\nu_j\delta_j(\aaa^{(n+1)})=0, \forall j\in\{1,...,r\},
\end{align}
\end{subequations}
which means $\aaa^{(n+1)}$ is a KKT solution of the long-term master problem \eqref{longP}.

As such, the convergence of Algorithm~\ref{alg:main} to a stationary point of problem \eqref{mainP1:epi} in the sense of Definition~\ref{stationarypoint} are guaranteed if the numbers of iterations of Algorithms~\ref{alg:2} and~\ref{alg:main} are infinity, i.e., $I_S^{(n)}\rightarrow \infty$, $I_L\rightarrow \infty$, and $\lambda\to\infty$. In practice, it is acceptable to choose finite $\{I_S^{(n)}\}_{n\in\{1,\dots,I_L\}}$, $I_L$, and $\lambda$ for an approximate convergence. Therefore, Algorithm~\ref{alg:main} is guaranteed to converge to the neighbourhood of the stationary solutions of problem \eqref{mainP1:epi}, and hence, \eqref{mainP1}.



\ifCLASSOPTIONcaptionsoff
  \newpage
\fi

\bibliographystyle{IEEEtran}
\bibliography{IEEEabrv,newidea2020}
\end{document}